\DeclareMathAlphabet{\mathcal}{OMS}{zplm}{m}{n}
\def\tmp#1#2#3{%
  \definecolor{Hy#1color}{#2}{#3}%
  \hypersetup{#1color=Hy#1color}}
\def\tmp#1#2{%
  \colorlet{Hy#1bordercolor}{Hy#1color#2}%
  \hypersetup{#1bordercolor=Hy#1bordercolor}}
\crefname{mechanism}{mechanism}{mechanisms}
\crefname{claim}{claim}{claims}
\theoremstyle{plain}
\newtheorem{theorem}{Theorem}[section]
\newtheorem{lemma}[theorem]{Lemma}
\newtheorem{corollary}[theorem]{Corollary}
\theoremstyle{definition}
\newtheorem{definition}[theorem]{Definition}
\theoremstyle{remark}
\newcommand{\algcom}[1]{\tcp*[r]{#1}}
\newcommand{\algcomf}[1]{\tcp*[f]{#1}}
\let\oldnl\nl
\newcommand{\nonl}{\renewcommand{\nl}{\let\nl\oldnl}}
\newenvironment{mechanism}[1][htb]{%
    
   \begin{algorithm}[#1]%
  }{\end{algorithm}}
\renewcommand{\vec}[1]{\mathbf{#1}}
\DeclareMathOperator*{\argmax}{arg\,max}
\newcommand{\set}[1]{\ensuremath{\{#1\}}}
\newcommand{\sset}[2]{\ensuremath{\{#1 \mid #2\}}}
\newcommand{\mech}{\ensuremath{\mathcal{M}}}
\newcommand{\opt}{\ensuremath{\textsc{opt}}}
\newcommand{\minisec}[1]{\bigskip\noindent\textbf{#1.~~}}
\newcommand{\tc}{\ensuremath{t}}
\newcommand{\dc}{\ensuremath{c}}
\newcommand{\gold}{\ensuremath{\vec{\dc}^{\textrm{GT}}}}
\newcommand{\wood}{\ensuremath{\vec{\dc}^{\textrm{WS}}}}
\newcommand{\apx}{\textsc{apx}}
\newcommand{\wwm}{\textsc{WillyWonka}}
\title{ The Effectiveness of Golden Tickets and Wooden Spoons for Budget-Feasible Mechanisms}
\author[1]{Bart de Keijzer}
\affil[1]{Department of Informatics, King's College London, UK.}
\author[2,3]{Guido Sch{\"a}fer}
\affil[2]{Centrum Wiskunde \& Informatica (CWI), The Netherlands. }
\affil[3]{Institute for Logic, Language and Computation, University of Amsterdam, The Netherlands.  }
\author[2]{Artem Tsikiridis}
\author[1]{Carmine Ventre}
\date{}
\begin{document}

\maketitle
\begin{abstract}%
\noindent 
One of the main challenges in mechanism design is to carefully engineer incentives ensuring truthfulness while maintaining strong social welfare approximation guarantees. But these objectives are often in conflict, making it impossible to design effective mechanisms. An important class of mechanism design problems that belong to this category are \emph{budget-feasible mechanisms}, introduced by Singer (2010). Here, the designer needs to procure services of maximum value from a set of agents while being on a budget, i.e., having a limited budget to enforce truthfulness. It is known that no deterministic (or, randomized) budget-feasible (BF) mechanism satisfying dominant-strategy incentive compatibility (DSIC) can surpass an approximation ratio of $1+\sqrt{2}$ (respectively, $2$). However, as empirical studies suggest, factors like limited information and bounded rationality question the idealized assumption that the agents behave perfectly rationally. Motivated by this, Troyan and Morill (2022) introduced \emph{non-obvious manipulability (NOM)} as a more lenient incentive compatibility notion, which only guards against “obvious” misreports.

In this paper, we investigate whether resorting to NOM enables us to derive improved mechanisms in budget-feasible domains. We establish a tight bound of $2$ on the approximation guarantee of BF mechanisms satisfying NOM for the general class of monotone subadditive valuation functions. In terms of computational restrictions, the actual approximation ratio depends on the oracle model used to solve the respective allocation problem, but NOM imposes an impossibility barrier of $2$ only for deterministic mechanisms. Our result thus establishes a clear separation between the achievable guarantees for DSIC (perfectly rational agents) and NOM (imperfectly rational agents). En route, we fully characterize BNOM and WNOM (constituting NOM) and derive matching upper and lower bounds, respectively. Conceptually, our characterization results suggest \emph{Golden Tickets} and \emph{Wooden Spoons} as natural means to realize BNOM and WNOM, respectively. Equipped with these insights, we extend our results to more complex feasibility constraints and show that, basically, the same design template can be used to guarantee NOM. Additionally, we show that randomized BF mechanisms satisfying NOM can achieve an expected approximation ratio of $1+\varepsilon$ for any $\varepsilon > 0$. 
\end{abstract}

\section{Introduction}

Consider the problem of hiring a set $N = \set{1, \dots, n}$ of service provider agents, each with a provision cost $c_i \ge 0$ and a value $v_i \ge 0$ for being hired. Assume for simplicity that values are publicly known whilst costs are private knowledge of the agents. The tension between costs and values gives rise to the interesting problem of hiring a subset of agents of maximum total value (which can be thought of as the sum of the values of the hired agents) subject to feasibility constraints on the costs. More specifically, the designer needs to pay providers according to ``value for money'' whilst covering the agents' costs. Without further guarantees, however, it would be beneficial for the maximum-value agents, amongst potentially others, to over-report their costs in order to increase their profits. This is where mechanism design principles may help out: \emph{incentive compatibility} of the mechanism used guarantees that the agents will not attempt to misguide the auctioneer, who can then get as good an approximation as possible for the underlying knapsack procurement auction problem. 

But can the auctioneer afford to enforce incentive compatibility? If the latter is too expensive a goal, then its guarantee is only theoretical and not realistic in practice. To address this issue, \citet{singer10} introduced the concept of \emph{budget feasibility}. Not only do we want the mechanism to ensure that agents do not misbehave, but we also want the payments from the designer to the agents to be bounded by a given budget $B$. How effective can a solution be if we impose both desiderata? Research on this question has led to a series of works studying the design of budget-feasible mechanisms for various settings. In the most basic setting with additive valuation functions, \citet{gravin20} provide the current best $3$-approximate deterministic budget-feasible and incentive compatible mechanism for this problem. \citet{chen11} proved that the best possible approximation guarantee for deterministic budget-feasible mechanisms that are incentive compatible is $1 + \sqrt{2}$. It is known that randomized mechanisms can perform strictly better: \citet{gravin20} derive a randomized budget-feasible and incentive compatible mechanism that achieves an approximation ratio of $2$ (in expectation), which, as they show, is the best possible approximation for incentive compatible randomized mechanisms. It is worth mentioning that both of the aforementioned lower bounds hold independently of any computational constraints; the primary challenge is the requirement of incentive compatibility.

But are agents perfectly rational in practice? Empirical research shows that, on one end of the spectrum, people may misreport even when it is not beneficial to them \cite{kagel87}, whilst on the other, people may fail to strategize even when it would lead to a higher payoff \cite{troyan20}. These two observations have led to the alternative incentive compatibility notions of \emph{obvious strategyproofness (OSP)} \cite{li17} and \emph{non-obvious manipulability (NOM)} \cite{troyan20}, respectively. OSP makes sure that incentive compatibility is preserved even when imperfectly rational agents tend to strategize needlessly, and is a strengthening of the notion of incentive compatibility considered in earlier work. Adopting this pessimistic viewpoint can thus only worsen the achievable approximation guarantee of budget-feasible mechanisms. Whilst in general OSP comes at a cost in terms of approximation guarantee, see, e.g., \cite{ashlagigonczarowski,MOR22,ec2023}, 
budget feasibility represents an unexpected exception in that there is no separation between classic incentive compatibility and OSP, see e.g.,  \cite{balkanski22, trieagle}. NOM takes a more optimistic perspective and guarantees incentive compatibility when agents only decide to engage with misreports that are ``obviously beneficial'' to them, defined in a certain precise sense. This notion enlarges the class of incentive compatible mechanisms, and is backed by empirical evidence of behavioral biases. How much better can the approximation guarantee of budget-feasible NOM mechanisms be? The algorithmic power of this more permissive notion of incentive compatibility is the \emph{unique} driving force behind our investigations presented in this paper. Importantly, we neither endorse nor reject the idea that NOM is the `right' notion of incentive compatibility for imperfectly rational agents. We refer the interested reader to \cite{Troyan2022} for its rationale and extensive motivational applications. Our focus is purely algorithmic here: What is the difference between the algorithms of NOM mechanisms as opposed to (O)SP?

\minisec{Our Contributions and Technical Merits}
We initiate the study of budget-feasible mechanism design under the solution concept \emph{non-obvious manipulability (NOM)}. In a nutshell, we give an algorithmic recipe for NOM mechanisms and show its flexibility, providing strong guarantees in terms of approximation ratios and applicability to general mechanism design problem with feasibility constraints. More specifically, we identify three main contributions.

\begin{description} 
\item[Main Result 1:] 
We provide a full understanding of the algorithmic power of budget-feasible NOM mechanisms, covering both deterministic and randomized approaches.
\end{description}

Technically, NOM requires to compare certain extremes of the utility that agents can experience. Specifically, the minimum (maximum, respectively) utility of each agent $i$ over the reports of all other agents for truth-telling must be no lower than the minimum (maximum, respectively) utility for lying; this property is called \emph{worst-case NOM (WNOM)} (or, \emph{best-case NOM (BNOM)}, respectively). 
A budget-feasible mechanism is NOM if it satisfies both BNOM and WNOM. 
Recent work by \citet{archbold24} introduces a convenient technique for the design of NOM mechanisms, which are captured by a class of mechanisms that the authors dub \emph{Willy Wonka mechanisms}, to guarantee either property. In a Willy Wonka mechanism, BNOM is guaranteed by the definition of a \emph{golden ticket} as part of the mechanism, whereas WNOM is satisfied via a \emph{wooden spoon}. In our context, golden tickets (and wooden spoons, respectively) mean that for each agent $i$ and each cost declaration $c_i$, there exists a cost profile of the other agents for which $i$ is hired and paid the whole budget $B$ (and not hired, respectively). 

As our first technical contribution, we demonstrate how to leverage this framework to design a deterministic, budget-feasible mechanism that is NOM and $2$-approximate when the value of the hired set $S$ is determined by a monotone, subadditive function. This class of valuation functions is significantly more expressive than additive or submodular functions. The incentive properties of our mechanism come from the definitions of golden tickets and wooden spoons for cost profiles wherein either overpaying hired agents or forcibly not hiring agents does not impose too large a loss in terms of approximation. No budget-feasible deterministic strategyproof mechanism can achieve an approximation ratio better than $(1+\sqrt{2})$ (even for additive valuation functions). For context, the state of the art is $4.75$ for monotone submodular valuations with clock (OSP) auctions \cite{trieagle}. For the relaxed incentive compatibility notion of NOM, we achieve a strictly better approximation guarantee of $2$, albeit for a significantly larger class of valuation functions.
As a side result, we also demonstrate that our mechanism can handle more complex combinatorial domains. The respective approximation guarantee depends on what we coin the \emph{agent-enforcing gap} (see below for details).
The algorithmic power of NOM becomes even more striking when we are allowed to randomize over a set of deterministic NOM mechanisms. 
We present a simple randomized, budget-feasible mechanism that is universally NOM and achieves an approximation guarantee of $1+\varepsilon$ for any $\varepsilon > 0$ when valuation functions are monotone and subadditive. If computational constraints are a concern, the approximation ratio degrades to that of the best achievable guarantee for the underlying knapsack problem with monotone subadditive valuations.
For monotone submodular valuations, randomized clock auctions can get a $4.3$ approximation \cite{trieagle} whilst no randomized strategyproof mechanism can do better than $2$ \cite{chen11}. 
The significant gain that can be made with randomization sheds light on the algorithmic flexibility of the golden ticket and wooden spoon recipe. 
Strategyproofness requires a threshold for each agent $i$ and each $\vec{c}_{-i}$. This rigid structure constrains the approximation guarantee of algorithms forced to produce suboptimal solutions and severely limits the benefits of randomization.
On the contrary, the Willy Wonka framework is much more forgiving defining one BNOM and one WNOM threshold for each $i$. With a large enough support, these thresholds can be drawn in a way that the loss in approximation occurs with negligible probability leading to a much better upper bound.

Our second main contribution shows that Willy Wonka is the only design template for NOM, thus showing that the gain provided by our mechanisms is tight.
\begin{description}
\item[Main Result 2:] 
We fully characterize the class of budget-feasible mechanisms that satisfy NOM. We derive a characterization both for WNOM and BNOM independently.
\end{description}

We \emph{fully} characterize (direct revelation) NOM and individually rational (IR for short, guaranteeing that honest agents will never incur into a loss) mechanisms for binary allocation problems and single-dimensional agents. By `fully' here we mean that we characterize both the allocation/selection function and payment function of any NOM mechanism in our procurement setting. 
Our characterization refines the characterization given in \cite{AdKV2023a} for NOM mechanisms and general outcome spaces in that we not only explicitly describe the selection behaviour of NOM for binary allocation problems but also characterize the space of admissible payments. An instructive way to look at our characterization is in terms of the aforementioned \emph{thresholds}. As noted above, strategyproofness requires the existence of a threshold; the agent is hired (not hired, respectively) for each cost lower (higher, respectively) than the threshold; a hired agent is paid the threshold for the particular $\vec{c}_{-i}$. 
The way in which NOM relaxes strategyproofness is as follows. BNOM requires the existence of a threshold for each $i$ such that $i$ is \emph{never} hired for each $(c_i, \vec{c}_{-i})$ for $c_i$ bigger than the threshold, and hired in \emph{at least} one $(c'_i, \vec{c}_{-i})$ for each $c_i'$ lower than the threshold. When hired, the payment received by $i$ can never be higher than the threshold. WNOM instead requires a threshold that is in a sense the dual of BNOM's threshold: for agent $i$ bidding below it, $i$ is \emph{always} hired in each $(c_i, \vec{c}_{-i})$ and for agent $i$ bidding above it, $i$ is not hired in \emph{at least} one $(c'_i, \vec{c}_{-i})$. When hired, the agent receives a payment which is at least the value of the threshold. 

Armed with the characterization of WNOM and BNOM for our setting, we derive lower bounds on the achievable approximation guarantees of deterministic budget-feasible mechanisms, separately for BNOM and WNOM, for additive valuation functions.

\begin{description}

\item[Main Result 3:] 
We identify the optimal approximation factors of $2$ and $\varphi$ (the golden ratio), that can be achieved by an individually rational and budget feasible mechanism under BNOM and WNOM respectively. 
Additionally, we present a distinct mechanism for WNOM that achieves the optimal $\varphi$-approximation guarantee, demonstrating a non-trivial approach to this problem.
\end{description}

Consequently, no deterministic budget-feasible mechanisms satisfying NOM can achieve an approximation guarantee strictly better than $2$. It is BNOM (rather than WNOM) the barrier to improving further the approximation factor of our main NOM mechanism. In particular, this shows that our mechanism is optimal for all classes of valuation functions ranging from additive to monotone subadditive. We find that WNOM is in fact less demanding, drawing an interesting parallel with past findings about the bilateral trade problem \cite{AdKV2023a}.

\minisec{Related Work}
The NOM solution concept was introduced by \citet{troyan20} to model scenarios where agents, due to limited contingent reasoning abilities, might not fully exploit strategic behavior to their advantage. In their work, \citet{troyan20} characterize NOM mechanisms in both settings without monetary transfers (such as bipartite matching) and with monetary transfers (such as auctions and bilateral trade). Interestingly, for the bilateral trade setting, \citet{AdKV2023b} showed in a follow-up work a separation between WNOM and BNOM, which, in some sense, aligns with the results of our present work. Specifically, they demonstrate that while there exist bilateral trade mechanisms that are individually rational, efficient, weakly budget-balanced, and WNOM, the same cannot be achieved when substituting WNOM with BNOM under any relaxation of weak budget-balance. Finally, NOM has also been studied in various other domains, including  
voting rules \cite{AzizLam2021,elkindneohteh}, fair division \cite{PsomasVerma2022, OrtegaSegal-Halevi2022}, assignment mechanisms \cite{Troyan2022}, hedonic games \cite{flamminietal}, and settings 
beyond direct-revelation mechanisms \cite{AdKV2023a}.

The concepts of ``golden ticket'' and ``wooden spoon'' were introduced by \citet{archbold24}, who focus on designing prior-free mechanisms for single-parameter domains with the objective of attaining revenue guarantees. Additionally, \citet{archbold24} characterize NOM mechanisms for the settings they study in terms of their allocation functions \footnote{Note that our characterization of NOM budget-feasible mechanisms in Section \ref{sec:characterization} is in terms of both the allocation function and payment function.}. In particular, they show that all NOM mechanisms belong to a class which they call Willy Wonka Mechanisms.

As noted above, OSP takes the opposite perspective to bounded rationality asking for truthtelling to be an ``obviously'' dominant strategy. OSP mechanisms have been considered in various contexts; many results indicate that OSP is a demanding restriction of the classical notion of strategyproofness. For example, there is no OSP mechanism that returns a stable matching \cite{ashlagigonczarowski}. Moreover, for single-dimensional agent domains, there is a link between greedy algorithms and OSP \cite{wine21,ec2023} essentially indicating that the approximation quality achievable with these mechanisms is as good as that of 
a greedy algorithm.

The study of strategy-proof budget-feasible procurement auctions was initiated by \citet{singer10}. Beyond the canonical setting with additive valuations \citet{singer10} (see also \cite{gravin20, chen11}), several other budget-feasible mechanism design settings have been explored in the literature. These include settings with more general valuation functions such as monotone submodular valuation functions \cite{chen11, jalaly18, balkanski22, trieagle}, non-monotone submodular valuation functions \cite{amanatidis19, huang15, trieagle}, and XOS and subadditive valuation functions \cite{bei17, amanatidis17, dobzinski11, balkanski22, neogi24}. Additionally, different feasibility restrictions have been considered \cite{amanatidis16, huang15, leonardi17}. Finally, the model has been studied under a large-market assumption \cite{anari18} and for divisible agents \cite{klumper22, amanatidis23}. For further details, we refer the reader to the recent survey of \citet{liu24}.

\section{Model and Preliminaries} \label{sec:model}

We consider a procurement auction involving a set of agents $N = \{1, \dots, n\}$ and an auctioneer who has some budget $B \in \mathbb{R}_{> 0}$ available. Each agent $i \in N$ offers a service and has a private cost parameter $\tc_i \in \mathbb{R}_{\geq 0}$, representing their true cost for providing this service. The auctioneer has access to a set function $V:2^N \mapsto \mathbb{R}_{\geq 0}$, which determines their value of hiring any subset $S \subseteq N$ of agents. Throughout the paper, we assume that $V$ is \emph{monotone}, i.e., for every $S \subseteq T\subseteq N$, it holds that $V(S) \leq V(T)$, and \emph{normalized}, i.e., $V(\emptyset) = 0$.

A deterministic mechanism $\mathcal{M}$ consists of an allocation rule $\vec{x}: \mathbb{R}_{\geq 0}^n \to \{0,1\}^n$ and a payment rule $\vec{p}: \mathbb{R}_{\geq 0}^n \to \mathbb{R}_{\geq 0}^n$. To begin with, the auctioneer collects a profile $\vec{\dc} = (\dc_i)_{i \in N} \in \mathbb{R}_{\geq 0}^n$ of declared costs from the agents. Here, $\dc_i \in \mathbb{R}_{\geq 0}$ denotes the cost \emph{declared} by agent $i \in N$ and may differ from their true cost $t_i$. Given the declarations $\vec{\dc}$, the auctioneer determines an allocation $\vec{x}(\vec{\dc}) = (x_1(\vec{\dc}), \dots, x_n(\vec{\dc}))$, where $x_i(\vec{\dc}) \in \{0,1\}$ is the allocation decision for agent $i$, i.e., whether agent $i$ is hired or not. Given an allocation $\vec{x}(\vec{\dc})$, we use $X(\vec{\dc}) = \{i \in N \mid x_i(\vec{\dc}) = 1\}$ to refer to the set of agents who are selected under $\vec{x}(\vec{c})$; we use $\vec{x}(\vec{c})$ and $X(\vec{c})$ interchangeably. The auctioneer also determines a vector of payments $\vec{p}(\vec{\dc}) = (p_1(\vec{\dc}), \dots, p_n(\vec{\dc}))$, where $p_i(\vec{\dc}) \in \mathbb{R}_{\geq 0}$ is the payment that agent $i$ receives for their service. Overall, we denote an instance of our procurement auction environment by the tuple $I = (N, \vec{\dc}, V, B)$. When part of the input is clear from the context, we often refer to an instance by its declared cost profile $\vec{\dc}$ simply.

\minisec{Imperfect Rationality and Non-Obvious Manipulability}
We consider the setting where each agent $i \in N$ wants to maximize their utility function which is \emph{quasi-linear}, i.e., the utility of agent $i$ with true (private) cost $t_i \geq 0$ for a profile $\vec{\dc}$ is $u_i^{\tc_i}(\vec{\dc})= p_i(\vec{\dc})-t_i \cdot x_i(\vec{\dc})$.

Mechanisms that are individual rational and strategyproof for single parameter domains (as considered here) are generally well understood. \citet{myerson81} gives a complete characterization of the properties that an allocation rule must satisfy and provides a formula to derive the corresponding payments. In this work, we focus on a less stringent notion of incentive compatibility, called \emph{not obviously manipulable} (NOM), which is more suitable if the agents are imperfectly rational (see \cite{troyan20}).

\begin{definition}[\cite{troyan20}]\label{def:nom}
A mechanism $\mech = (\vec{x}, \vec{p})$ is \emph{not obviously manipulable (NOM)} if it satisfies the following two properties: 
\begin{itemize}\itemsep0pt

\item \emph{Best-Case Not Obviously Manipulable (BNOM):} for every agent $i \in N$, and all $t_i \in [0,B]$ it holds that
\begin{equation}\label{eq:BNOM}
\sup_{\vec{\dc}_{-i}} u_i^{t_i}(t_i, \vec{\dc}_{-i}) \ge 
\sup_{\vec{\dc}_{-i}} u_i^{t_i}(\dc_i, \vec{\dc}_{-i}) \qquad \forall \dc_i \ge 0.
\end{equation}

\item \emph{Worst-Case Not Obviously Manipulable (WNOM):} for every agent $i \in N$, and all $t_i \in [0,B]$ it holds that
\begin{equation}\label{eq:WNOM}
\inf_{\vec{\dc}_{-i}} u_i^{t_i}(t_i, \vec{\dc}_{-i}) \ge 
\inf_{\vec{\dc}_{-i}} u_i^{t_i}(\dc_i, \vec{\dc}_{-i}) \qquad \forall \dc_i \ge 0.
\end{equation}
\end{itemize}
\end{definition} 

\minisec{Additional Design Objectives}
In our procurement auction environment, we are interested in designing a not obviously manipulable mechanism $\mech = (\vec{x}, \vec{p})$ that
additionally satisfies the following properties for every given cost profile $\vec{\dc}$:
\begin{itemize}\itemsep0pt

\item \emph{Individual Rationality (IR):~~} 
An agent that is selected receives at least their declared cost as payment, and the payments are non-negative always, i.e., for every $i \in N$, $p_i(\vec{\dc}) \ge \dc_i \cdot x_i(\vec{\dc})$.

\item \emph{Normalized Payments (NP):~~} 
An agent that is not selected receives no payment, i.e., for every $i \notin X(\vec{\dc})$, $p_i(\vec{\dc})=0$.

\item \emph{Budget-Feasibility (BF):~~}
The auctioneer's total amount of payments made to the agents does not exceed their budget $B$, i.e., $\sum_{i \in N}p_i(\vec{\dc}) \le B$.
\end{itemize}

Note that a mechanism $\mech$ satisfying both IR and BF cannot hire any agent $i \in N$ whose declared cost is larger than the budget $B$, i.e., $\dc_i>B$. Throughout the paper, we assume without loss of generality that $\vec{\dc} \in [0,B]^n$ and $t_i \in [0,B]$ for all agents $i \in N$.\footnote{Note that this assumption is without loss of generality as we do not impose restrictions on the bidding space: given a cost profile $\vec{\dc}$, each agent $i \in N$ with $\dc_i>B$ can simply be discarded by the mechanism.}

\minisec{Valuation Functions and Approximation Guarantees} 
In budget-feasible mechanism design, the quality of the outcome computed by a mechanism $\mech = (\vec{x}, \vec{p})$ is assessed with respect to an optimal solution of the underlying packing problem (see, e.g., \citet{singer10}). Basically, the intuition here is that we measure the relative loss incurred by the mechanism in comparison to the best possible allocation in the non-strategic setting. 

More formally, given an instance $I=(N, \vec{\dc}, V, B)$, we measure the performance of a mechanism $\mech=(\vec{x}, \vec{p})$ by comparing $V(X(\vec{\dc}))$ with the optimal solution of the following \emph{packing problem}:
\begin{equation}\label{eq:opt-packing}
\max \ \ V(X) \quad \text{s.t.} \quad \sum_{i \in X} c_i x_i \leq B, \quad X \subseteq N.
\end{equation}
We use $X^*(\vec{\dc})$ to refer to an optimal (player-set) solution of the above problem; similarly; we use $\vec{x}^*(\vec{c})$ to denote the respective (binary) allocation. We say that a deterministic mechanism $\mech = (\vec{x}, \vec{p})$ is an \emph{$\alpha$-approximation mechanism} with $\alpha \geq 1$ if for every cost profile $\vec{\dc}$ it holds that $\alpha \cdot V(X(\vec{\dc})) \geq V(X^*(\vec{\dc}))$.

Our mechanism makes use of an approximation algorithm for the packing problem in \eqref{eq:opt-packing}. We use $\apx$ to refer to any such (deterministic) algorithm.\footnote{We assume that the algorithm always selects agents that declare a cost of $0$, which is without loss of generality.} 
Given an instance $I = (N, \vec{\dc}, V, B)$, we use $X_{\apx}(\vec{\dc})$ and $\vec{x}^*_{\apx}(\vec{\dc})$ to refer to the (player-set) solution and (binary) allocation, respectively, computed by $\apx$ on $I$. 
Throughout the paper, we assume that $\apx$ achieves an approximation guarantee of $\gamma \ge 1$, i.e., 
$\gamma \cdot V(X_{\apx}(\vec{\dc})) \geq V(X^*(\vec{\dc}))$.

Unless stated otherwise, we consider instances where the valuation function $V$ is monotone and normalized, without further assumptions on its form. In some cases, however, we consider valuation functions $V$ of certain forms. The relevant functions are defined as follows: 
\begin{enumerate}\itemsep0pt
    \item \textbf{Additive Valuation Function:~} For every $S, T \subseteq N$, the function satisfies $V(S \cup T) = V(S) + V(T)$. We denote the additive valuation function by $V_{\textsc{add}}$. Equivalently, $V_{\textsc{add}}$ can simply be represented by a value profile $\vec{v} = (v_i)_{i \in N} \in \mathbb{R}^n_{\ge 0}$ and defining $V_{\textsc{add}}(S) = \sum_{i \in S} v_i$ for every $S \subseteq N$. 
    Clearly, for $V = V_{\textsc{add}}$, the packing problem in \eqref{eq:opt-packing} is the classical \emph{Knapsack Problem}. Even though the problem is known to be NP-hard, it is well known that it admits an FPTAS, i.e., there exists a polynomial-time approximation algorithm with $\gamma = 1 + \varepsilon$, for an arbitrarily small $\varepsilon > 0$. 
    \item \textbf{Subadditive Valuation Function:~} For every $S, T \subseteq N$, the function satisfies $V(S \cup T) \leq V(S) + V(T)$. We denote the subadditive valuation function by $V_{\textsc{sub}}$. When $V = V_{\textsc{sub}}$, the packing problem in \eqref{eq:opt-packing} is notoriously hard in the general case. In particular, \citet{singer10} has shown that obtaining a $\gamma = o(n)$ requires an exponential number of queries to $V_{\textsc{sub}}$.\footnote{In fact, this is true even for the special case of monotone fractionally subadditive functions, see e.g., \cite{lehmann01}.} 
    However, under the assumption of having access to a stronger oracle which uses \emph{demand queries}\footnote{We refer the reader to \citet{badanidiyuru19} for a formal definition of demand queries.}, \citet{badanidiyuru19} have shown that there is an algorithm which attains $\gamma = 2 + \varepsilon$, for an arbitrarily small $\varepsilon > 0$, using a polynomial number of oracle calls. 

    \item \textbf{Submodular Valuation Functions:~}
    Finally, we note that the class of monotone subadditive functions includes the class of \emph{monotone submodular functions}, i.e., for every $S, T \subseteq N$, $V(S \cup T) \le V(S) + V(T) - V(S\cap T)$.
    For these functions there is an approximation algorithm that achieves an approximation guarantee of $\gamma = \frac{e}{e-1} \approx 1.58$, even for the standard model with value queries (see e.g., \citet{khuller99} and \citet{sviridenko04}).

\end{enumerate}

\section{A Framework to Design NOM Budget-Feasible Mechanisms}
\label{sec:mechanism-subadditive}

The main result of this section is a general-purpose deterministic mechanism that is individually rational (IR), budget-feasible (BF) and not obviously manipulable (NOM), named $\wwm$ (Mechanism \ref{mech:ww}). Even though this mechanism achieves all our mechanism design objectives, as a stand-alone mechanism it does not offer tangible approximation guarantees for the auctioneer. However, the idea is to use this mechanism in certain types of \emph{compositions} of mechanisms (both deterministic and randomized) that also perform well in terms of approximation guarantee.
More specifically, by combining our $\wwm$ mechanism with another simple mechanism, we derive a deterministic mechanism that is IR, BF and NOM and achieves an approximation guarantee of $2$ for the general class of monontone subadditive valuation functions.
The approximation guarantee of our mechanism is best possible: As we show in Section~\ref{sec:bnomchar}, even for the more restrictive class of additive valuation functions, no deterministic mechanism satisfying IR, BF and BNOM (only) can achieve an approximation guarantee strictly better than $2$.

The section is structured as follows. In Section \ref{subsec:willy-wonka}, we present our $\wwm$ mechanism and show that it satisfied IR, BF and NOM. Then, in Section \ref{subsec:MaxOrWW}, we combine our $\wwm$ mechanism with another simple mechanism (trivially satisfying IR, BF and NOM) and prove that the resulting mechanism achieves an approximation ratio of $2$ for monontone subadditive valuation functions.

\subsection{NOM Through Golden Tickets and Wooden Spoons}\label{subsec:willy-wonka}

We describe our new $\wwm$ mechanism in more detail. 
The mechanism achieves NOM by implementing \emph{golden tickets} and \emph{wooden spoons}, which is a technique that was first introduced by \citet{archbold24}. Here, we adapt this technique to our budget-feasible mechanism design setting. 
As it turns out, 
it suffices to use simple, but carefully designed golden tickets and wooden spoons. In Section~\ref{sec:characterization}, we introduce much more refined notions of golden tickets and wooden spoons in order to derive our characterization results for BNOM and WNOM. 

\minisec{Golden Tickets and Wooden Spoons Technique}
The high-level idea behind this technique is do define for each agent two special cost profiles of the opposing agents, called \emph{golden ticket} and \emph{wooden spoon}, that trigger particular outcomes when they occur. 

The golden tickets realize BNOM by ensuring that for each agent $i \in N$ the left-hand side of the BNOM condition \eqref{eq:BNOM} attains maximum utility. More specifically, for each agent $i$ and cost $c_i \in [0, B)$, the golden ticket $\gold_{-i}$ is a cost profile of the opposing agents for which $i$ is chosen and paid the maximum amount $B$. 
That is, if $i$ declares their private type $c_i = t_i$, the golden ticket ensures that the utility of $i$ is $u_i^{t_i}(t_i, \gold_{-i}) = B - t_i$. Thus, the supremum on the left-hand side of \eqref{eq:BNOM} attains the maximum possible utility, establishing BNOM. 

Similarly, the wooden spoons implement WNOM by ensuring that for each agent $i \in N$ the right-hand side of \eqref{eq:WNOM} is non-positive. More concretely, for each agent $i$ and cost declaration $c_i \in [0, B)$, the wooden spoon $\wood_{-i}$ is a cost profile of the opposing agents for which $i$ is rejected and paid zero (or $i$ is accepted and paid zero). In particular, for any cost declaration $c_i$, the wooden spoon ensures that the utility of $i$ is $u_i^{t_i}(c_i, \wood_{-i}) \le 0$. 
The infimum on the right-hand side of \eqref{eq:WNOM} is thus non-positive, proving WNOM. 

It is crucial that the golden tickets and wooden spoons are defined such that they do not `interfere' with each other, i.e., each agent $i$ must be able to implement their golden ticket and wooden spoon. 
Said differently, for a given cost profile $\vec{c}$, if $i$ admits their golden ticket, i.e., $\vec{c}_{-i} = \gold_{-i}$, or wooden spoon, i.e., $\vec{c}_{-i} = \wood_{-i}$, then the respective allocation and payment must be effectuated for $i$.

\begin{table}[t]
    \centering
    \begin{tabular}{|@{\qquad}c@{\qquad}|@{\qquad}c@{\qquad}c@{\qquad}|}
    \hline 
        & & \\[-2ex]
        agent $i$ & $(c_i,\, \gold_{-i})$ & $(c_i,\, \wood_{-i})$ \\[.5ex]
        \hline\hline
        & & \\[-2ex]
         1 & $(c_1, B, B)$ & $(c_1, 0, 0)$ \\[0ex]
         2 & $(0, c_2, B)$ & $(0, c_2, 0)$ \\[0ex]
         3 & $(0, 0, c_3)$ & $(B, B, c_3)$ \\[.5ex]
        \hline
    \end{tabular}
    \caption{Golden tickets and wooden spoons for $n = 3$. 
    Note that these profiles are not interfering for $c_i \neq B$. The only non-unique cost profile is 
    $\vec{c} = (0, 0, 0)$ which is the golden ticket of agent 3 and the wooden spoons of agents 1 and 2. 
    But these are effectuated by choosing all three agents and paying $B$ to agent $1$ and $0$ to agents 2 and 3.  
    Note also that the restriction $c_i \neq B$ is crucial as otherwise the golden tickets would be interfering. For example, then $(0, B, B)$ would be the golden ticket of agent 1 and 2. But these cannot be effectuated without violating budget feasibility (as both agents would have to be selected and paid $B$).}
    \label{tab:GT-WS}
\end{table}

\minisec{Golden Tickets and Wooden Spoons for \wwm}
The golden tickets and wooden spoons used by our $\wwm$ mechanism are defined as follows. For each agent $i \in N$ with $c_i \in [0, B)$, we define: 
\begin{equation}
\gold_{-i} = (\underset{1}{0}, \underset{\dots}{\dots}, \underset{i-1}{0}, \underset{i+1}{B}, \underset{\dots}{\dots}, \underset{n}{B}) 
\qquad \text{and}\qquad
\wood_{-i} = 
\begin{cases} (0, \ldots, 0) & \text{ if } i < n \\
(B, \ldots, B) & \text{ if }i = n .
\end{cases}
\label{eq:gtws}
\end{equation}

It is not hard to verify that these golden tickets and wooden spoons are not interfering with each other. 
In fact, for this it is crucial that they are defined for $c_i \neq B$ only. It will become clear below that the case $c_i = B$ is handled automatically (due to IR and BF of the mechanism). 
The golden tickets and wooden spoons for $n = 3$ are given in Table~\ref{tab:GT-WS}.

\begin{mechanism}[t]
\caption{$\wwm(I)$}\label{mech:ww}

\nonl \hspace*{-1em} $\rhd$ {\bf{Input:~~}} instance $I=(N, \vec{c}, V, B)$

Rename the agents so that $V(\{1\}) \ge V(\{2\}) \ge \dots \ge V(\{n\})$\label{alg:rename} \;
\label{alg:ww:else}
\If(\algcomf{$i$ gets golden ticket}){there is an $i \in N$ such that  $\dc_i \in [0, B)$ and $\vec{\dc}_{-i} = \gold_{-i}$ \label{alg:gt}}{%
$x_i = 1$, $p_i = B$ \label{alg:gti}
\\
$x_j = 1$, $p_j = 0$ for $j=1,\dots,i-1$ \label{alg:gtii} \\ 
$x_j = 0$, $p_j = 0$ for all $j =i+1,\dots, n$ \label{alg:ww:gt}\;
}
\ElseIf(\algcomf{$i$ gets wooden spoon}){there is an $i \in N$ such that $\dc_i \in [0, B)$ and $\vec{\dc}_{-i} = \wood_{-i}$ \label{alg:ws}}{
$x_i = 0$, $p_i = 0$ \label{alg:wsi}\\
\If(\algcomf{$\wood_{-i} = (B,\ldots,B)$}){i = n}{
$x_1 = 1$, $p_1 = B$ \label{alg:ws-p1}\\
$x_j = 0$, $p_j = 0$ for $j \in N\setminus\{1,n\}$ \label{alg:ws-p2}
}
\Else(\algcomf{$\wood_{-i} = (0,\ldots,0)$}){$x_j = 1$, $p_j = 0$ for all $j \in N \setminus \{i\}$} 
\label{alg:wsend} }
\Else{
$\vec{x} = \vec{x}_{\apx}(N, \vec{c}, V, B)$ \algcom{use $\apx$ to solve packing problem in \eqref{eq:opt-packing}} \label{alg:opt}
$\vec{p} = \vec{\dc \cdot \vec{x}}$ \label{alg:opt-payment} \algcom{pay-as-bid} \label{alg:ww:pab}
}
\Return{$(\vec{x}, \vec{p})$}
\end{mechanism}

\minisec{\wwm\ Mechanism}
A detailed description of our $\wwm$ mechanism is given in Mechanism~\ref{mech:ww}. 
The mechanism takes an instance $I = (N, \vec{c}, V, B)$ as input. 
It first renames the agents such that $V(\{1\}) \ge \dots \ge V(\{n\})$. Then, the mechanism checks whether any agent admits their golden ticket with respect to the given cost profile. 
Note that the all-zero cost profile $\vec{c} = (0, \dots, 0)$ is handled through the golden ticket $\gold_{-n}$ of agent $n$ with $c_n = 0$; as we will show below, this also implements the wooden spoon of each agent $i \in \{1, \dots, n-1\}$ with $c_i = 0$ correctly (even though $i$ is selected in this case). 
After that, the mechanisms verifies if any agent admits their wooden spoon. Note that for $i = n$ the wooden spoon $\wood_{-n} = (B, \dots, B)$ (which is structurally different to avoid interference) is implemented differently: agent $1$ is allocated and paid $B$, while all other agents are not allocated. 
Finally, if no golden tickets or wooden spoons apply, the mechanism uses a $\gamma$-approximation algorithm $\apx$ to compute an approximate solution $\vec{x}_{\apx}$ to the respective packing problem. In this case, the allocation is determined by $\vec{x}_{\apx}$ and each allocated agent is paid their cost.
Given an instance $I=(N, \vec{c}, V, B)$, we use $(\vec{x}(\vec{c}), \vec{p}(\vec{c}))=\wwm(I)$ to refer to the allocation and payments output by the mechanism. 

\minisec{Analysis}
We show that $\wwm$ satisfies all mechanism design objectives introduced in Section \ref{sec:model}. It is easy to verify that $\wwm$ is individually rational and budget-feasible.

\begin{lemma}\label{lem:ww-bf-ir}
    $\wwm$ is individually rational and budget-feasible.
\end{lemma}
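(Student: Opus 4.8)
The plan is to verify the two properties by a case analysis over the four branches of $\wwm$ (golden ticket, wooden spoon with $i<n$, wooden spoon with $i=n$, and the \apx-fallback), checking in each case that (i) no selected agent is paid less than their declared cost and all payments are non-negative (IR and NP), and (ii) the total payment never exceeds $B$ (BF). Since the mechanism's payments are extremely structured — in every branch each agent is paid either $0$, their own cost $c_i$, or the full budget $B$, and at most one agent is ever paid $B$ — both checks should reduce to simple bookkeeping.

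Concretely, I would first dispose of the \apx-branch (lines~\ref{alg:opt}--\ref{alg:ww:pab}), where $\vec{p}=\vec{c}\cdot\vec{x}$ is pay-as-bid: here IR holds with equality, NP is immediate, and BF follows from the feasibility of $\vec{x}_{\apx}$ for the packing problem~\eqref{eq:opt-packing}, since the algorithm returns a set satisfying $\sum_{i\in X}c_ix_i\le B$ and each selected agent is paid exactly their declared cost. Next I would handle the golden-ticket branch (lines~\ref{alg:gt}--\ref{alg:ww:gt}): here the triggering agent $i$ is paid $B$, every earlier agent $j<i$ and every later agent is paid $0$, so the total payment is exactly $B$ (BF holds with equality), NP holds since only selected agents among $1,\dots,i$ are paid and the zero-paid ones still have $p_j=0$, and IR holds because the $B$-paid agent $i$ has $c_i\in[0,B)$ so $B\ge c_i$, while every other selected agent is paid $0=c_j$ only if their cost is $0$ — which is exactly the content of the golden ticket $\gold_{-i}$, where $c_1=\dots=c_{i-1}=0$. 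This last observation is the crux: IR for the zero-paid selected agents is not automatic but holds precisely because those agents' declared costs are $0$ in the golden-ticket profile.

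The wooden-spoon branches (lines~\ref{alg:ws}--\ref{alg:wsend}) are analogous. For $i=n$ with $\wood_{-n}=(B,\dots,B)$, only agent $1$ is selected and paid $B\ge c_1$, everyone else is unpaid, so BF and NP are clear and IR holds for agent~$1$ since $c_1<B$. For $i<n$ with $\wood_{-i}=(0,\dots,0)$, agent $i$ is rejected and unpaid while every other agent is selected and paid $0$; BF holds trivially (total payment $0$), NP holds, and IR for the zero-paid selected agents $j\ne i$ again holds exactly because their declared cost is $0$ in the wooden-spoon profile. I would also remark that the potentially ambiguous profile $\vec{c}=(0,\dots,0)$ is resolved by the golden-ticket branch (it matches $\gold_{-n}$), so no conflict arises.

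The main obstacle, such as it is, is ensuring the argument is genuinely exhaustive and consistent: one must confirm that the golden tickets and wooden spoons are non-interfering (so that exactly one branch fires, except for the $(0,\dots,0)$ overlap which is resolved in favour of the golden ticket, as the paper notes), and one must be careful that the zero-payment-to-selected-agent situations only ever occur on profiles where those agents truly declared cost $0$. Establishing this is really a matter of reading off the structure in~\eqref{eq:gtws} and Table~\ref{tab:GT-WS} rather than performing any nontrivial estimation, so I expect the proof to be short.
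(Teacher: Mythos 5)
Your proposal is correct and follows essentially the same route as the paper's proof: a direct case analysis over the mechanism's branches, observing that every selected agent is paid either $B \ge c_i$ or exactly $c_i$ (with the zero-paid selected agents having declared cost $0$ by the structure of $\gold_{-i}$ and $\wood_{-i}$), and that budget feasibility holds because either a single agent receives $B$ and all others receive $0$, or the pay-as-bid total $\sum_i c_i x_i \le B$ by feasibility of $\vec{x}_{\apx}$. One small correction: in the wooden-spoon branch for $i=n$, agent $1$'s declared cost is exactly $B$ (not strictly below $B$), but IR still holds there since $p_1 = B \ge c_1$.
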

\begin{proof}[Proof (Lemma~\ref{lem:ww-bf-ir})]
Let $(\vec{x}(\vec{c}), \vec{p}(\vec{c}))$ be the outcome computed by $\wwm$ on input $I=(N, \vec{c}, V, B)$.
It is easy to verify that IR holds by construction: 
For each agent $i \in X(\vec{\dc})$ the payment $p_i(\vec{c})$ is either $B \geq c_i$ (Lines \ref{alg:gti} and \ref{alg:ws-p1}) or $c_i$ (Lines \ref{alg:gtii}, \ref{alg:wsend} and \ref{alg:ww:pab}). 
Also, for each agent $i \notin X(\vec{\dc})$ we have $p_i(\vec{c}) = 0$ (Lines~\ref{alg:ww:gt}, \ref{alg:wsi} and \ref{alg:ws-p2}). 

Similarly, budget-feasibility is guaranteed because either a single agent is selected and paid $B$ (Lines \ref{alg:gti}--\ref{alg:ww:gt} and Lines \ref{alg:wsi}, \ref{alg:ws-p1} \& \ref{alg:ws-p2}), or it holds that $\sum_{i=1}^n p_i(\vec{c}) = \sum_{i=1}^n c_i \leq B$ because the approximation algorithm $\apx$ used in Line~\ref{alg:opt} computes a feasible solution to the packing problem. 
\end{proof}
We now show that $\wwm$ is not obviously manipulable.
\begin{lemma}\label{lem:ww-nom-bf-ir}
    $\wwm$ is not obviously manipulable.
\end{lemma}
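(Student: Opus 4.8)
The plan is to establish conditions \eqref{eq:BNOM} and \eqref{eq:WNOM} of Definition~\ref{def:nom} separately, exploiting that the golden tickets account for BNOM and the wooden spoons for WNOM, and that $\wwm$ is already IR and BF by Lemma~\ref{lem:ww-bf-ir}. For \textbf{BNOM}, I would fix $i$ and $t_i \in [0,B]$ and first record a universal upper bound on utility: since payments are budget-feasible and normalized, a hired agent is paid at most $B$ and a rejected agent is paid $0$, so $u_i^{t_i}(c_i, \vec{c}_{-i}) = p_i - t_i x_i \le B - t_i$ for \emph{every} declaration $c_i$ and every $\vec{c}_{-i}$ (using $t_i \le B$ for the rejected case). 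To match this bound on the left-hand side of \eqref{eq:BNOM}, I would evaluate the mechanism at $(t_i, \gold_{-i})$: for $t_i < B$ the profile $\gold_{-i}$ triggers Line~\ref{alg:gt} for $i$, the mechanism executes Lines~\ref{alg:gti}--\ref{alg:ww:gt}, and $p_i = B$ gives $u_i^{t_i}(t_i, \gold_{-i}) = B - t_i$, so the left supremum attains the bound and dominates every deviation. For the degenerate case $t_i = B$ both suprema equal $0$.

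For \textbf{WNOM} I would use the dual pair of bounds. Individual rationality gives $u_i^{t_i}(t_i, \vec{c}_{-i}) \ge 0$ for all $\vec{c}_{-i}$, so the left-hand infimum of \eqref{eq:WNOM} is nonnegative; it then suffices to show the right-hand infimum is $\le 0$ for every deviation $c_i$. This is the role of the wooden spoon: for $c_i \in [0,B)$ I would evaluate $\wwm$ at $(c_i, \wood_{-i})$ and check that $i$ is rejected (Line~\ref{alg:wsi}) — or, in the single coincidental all-zero case discussed below, hired at zero payment — so that $u_i^{t_i}(c_i, \wood_{-i}) \le 0$ and the infimum is pinned at $\le 0$.

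The technical core of both directions is the \emph{non-interference} of these special profiles: I must certify that when $\vec{c}_{-i} = \gold_{-i}$ (resp.\ $\wood_{-i}$) the mechanism effectuates $i$'s outcome rather than another agent's, and — since $\wwm$ tests golden tickets (Line~\ref{alg:gt}) before wooden spoons (Line~\ref{alg:ws}) — that no golden ticket pre-empts $i$'s wooden spoon. I would argue this from the $0/B$ structure displayed in Table~\ref{tab:GT-WS}: the pattern of $\gold_{-i}$ has its unique ``descent'' at coordinate $i$, and the restriction $c_i \neq B$ is precisely what forbids two agents from claiming golden tickets simultaneously. The only profile shared by several rules is the all-zero profile $(0,\dots,0) = \gold_{-n}$, which is also the wooden spoon of every $i<n$; here Lines~\ref{alg:gti}--\ref{alg:ww:gt} for agent $n$ hire each such $i$ at zero payment, giving utility $-t_i \le 0$ exactly as WNOM requires, while the structurally distinct spoon $\wood_{-n}=(B,\dots,B)$ is resolved by the dedicated branch in Lines~\ref{alg:ws-p1}--\ref{alg:ws-p2}.

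The step I expect to be the main obstacle is the boundary declaration $c_i = B$, which \eqref{eq:gtws} deliberately excludes from $\gold_{-i}$ and $\wood_{-i}$. For BNOM it is benign, since a report of $B$ still yields utility $\le B - t_i$ and cannot beat the golden ticket. For WNOM it is more delicate: an agent declaring $c_i = B$ can never invoke its own wooden spoon, and if it is hired then IR and BF force $p_i = B$ and hence the \emph{positive} utility $B - t_i$. To keep the deviation infimum $\le 0$ I would have to exhibit, for each $i$, some $\vec{c}_{-i}$ under which $i$ is \emph{rejected} (and thus has utility $0$) — for instance a profile landing in the $\apx$ branch (Line~\ref{alg:opt}) or in another agent's golden ticket with $i$ outside the hired set. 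Producing such an exclusion profile for every agent, and in particular for the top agent (which is selected in every golden-ticket branch and in all but one wooden-spoon branch), is the crux; this is presumably what the remark that ``$c_i = B$ is handled automatically (due to IR and BF)'' is meant to cover.
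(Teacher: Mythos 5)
Your proposal reproduces the paper's own argument in every part that the paper actually proves: BNOM via the universal utility cap $B - t_i$ (from NP and BF) attained at the golden ticket $\gold_{-i}$; WNOM for deviations $c_i \in [0,B)$ via IR (truthful infimum $\ge 0$) plus the wooden spoon $\wood_{-i}$, with the all-zero coincidence resolved through agent $n$'s golden ticket exactly as the paper does (hired at zero payment, utility $-t_i \le 0$); the true type $t_i = B$ dispatched by IR and BF; and the same non-interference reasoning behind Table~\ref{tab:GT-WS}.

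The obstacle you flag as the crux---the deviation $c_i = B$ in the WNOM condition \eqref{eq:WNOM}---is, however, a genuine gap, and it is a gap in the paper's proof as well, not only in your attempt: the paper's closing chain $\inf_{\vec{c}_{-i}} u_i^{t_i}(t_i,\vec{c}_{-i}) \ge 0 \ge u_i^{t_i}(c_i,\wood_{-i}) \ge \inf_{\vec{c}_{-i}} u_i^{t_i}(c_i,\vec{c}_{-i})$ is only justified for $c_i \in [0,B)$ (the wooden-spoon branch of Mechanism~\ref{mech:ww} requires $c_i \in [0,B)$), yet the conclusion is asserted ``for all $c_i \in [0,B]$''; and the remark that ``$c_i = B$ is handled automatically (due to IR and BF)'' covers the case where the \emph{true} type is $B$ (the first case of the proof), not deviations \emph{to} $B$ from a lower true type. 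Your instinct about where the difficulty sits is also correct. For $i \ge 2$ your fix works: agent $1$'s golden ticket profile $(c_1, B, \dots, B)$ with $c_1 < B$ is consistent with $c_i = B$ and rejects $i$, pinning the deviation infimum at $0$. For agent $1$, though, every golden ticket and every wooden spoon of another agent requires $c_1 = 0$, except $\wood_{-n} = (B,\dots,B)$, which \emph{hires} agent $1$ at payment $B$; so a rejection of agent $1$ declaring $B$ can only come from the \apx\ branch, and nothing forces a $\gamma$-approximation to provide one. Concretely, if $V$ is additive with $V(\{1\}) > \gamma\, V(N \setminus \{1\})$, then any valid \apx\ must hire agent $1$ on every profile (rejecting it caps the value at $V(N \setminus \{1\})$, while the feasible singleton $\{1\}$ already forces $V(X^*(\vec{c})) \ge V(\{1\}) > \gamma\, V(N\setminus\{1\})$); combined with IR, BF and pay-as-bid, declaring $B$ then guarantees agent $1$ utility exactly $B - t_1 > 0$ in every profile, whereas truth-telling yields $0$ at agent $1$'s own wooden spoon. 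This violates WNOM, so for such instances the lemma itself fails for any admissible \apx, and the hole cannot be patched by a cleverer choice of exclusion profile. (The issue disappears only in restricted regimes, e.g., inside the else-branch of \textsc{MaxOr}\wwm\ with an exactly optimal \apx, where $V(\{1\}) < V(N\setminus\{1\})$ ensures agent $1$ is rejected at profiles like $(B,\varepsilon,\dots,\varepsilon)$.) In short: your write-up is as complete as the paper's, and the step you could not close is a real flaw in the paper's proof of this lemma.
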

\begin{proof}
Let $(\vec{x}(\vec{c}), \vec{p}(\vec{c}))$ be the outcome computed by $\wwm$ on input $I=(N, \vec{c}, V, B)$.
Consider an arbitrary agent $i \in N$. 
It suffices to prove that for all $t_i \in [0,B]$ the BNOM condition in (\ref{eq:BNOM}) and the WNOM condition in (\ref{eq:WNOM}) are satisfied.

First consider the case $t_i = B$. 
Then \eqref{eq:BNOM} and \eqref{eq:WNOM} follow directly from IR and BF: 
If $i$ declares cost $t_i = B$ then their utility is $u_i^{t_i}(t_i, \vec{c}_{-i}) = 0$ (either $i$ is selected and paid $B$ by IR, or $i$ is not selected).
On the other hand, if $i$ declares cost $c_i \in [0,B]$ we have $u_i^{t_i}(c_i, \vec{c}_{-i}) \le 0$ (either $i$ is selected and paid at most $B$ by BF, or $i$ is not selected). Thus,  \eqref{eq:BNOM} and \eqref{eq:WNOM} hold.

Let $t_i \in [0,B)$ and consider BNOM first. 
For every $t_i \in [0, B)$, there is a unique golden ticket $\gold_{-i}$ for $i$ (as defined in \eqref{eq:gtws}) that is implemented in Lines~\ref{alg:gti}--\ref{alg:ww:gt}: $i$ is selected and paid $B$. Thus, we have $u_i^{t_i}(t_i,\gold_{-i}) = B-t_i$. 
Since $u_i^{t_i}(\cdot)$ nowhere exceeds $B-t_i$, we have
$\sup_{\vec{\dc}_{-i}} u_i^{t_i}(t_i,\vec{\dc}_{-i}) = B-t_i \geq \sup_{\vec{\dc}_{-i}} u_i^{t_i}(c_i,\vec{\dc}_{-i})$ for all $c_i \in [0,B]$.
We conclude that (\ref{eq:BNOM}) holds.

Let $t_i \in [0, B)$ and consider WNOM. 
Observe that for all $c_i \in (0,B)$, there is a unique wooden spoon $\wood_{-i}$ for $i$ (as defined in \eqref{eq:gtws}) that is implemented in Lines~\ref{alg:wsi}--\ref{alg:wsend}: $i$ is not selected and paid $0$. 
The same holds if $c_i = 0$ and $i = n$. 
Thus, we have $u_i^{t_i}(c_i, \wood_{-i}) = 0$ in both cases.
If $c_i = 0$ and $i \neq n$, $\vec{c} = (c_i, \wood_{-i})$ is the all-zero profile which coincides with the golden ticket of agent $n$ with $c_n = 0$, which is implemented instead: agent $i \neq n$ is selected in this case but paid 0 (Line \ref{alg:gtii}).
Thus, we have $u_i^{t_i}(c_i, \wood_{-i}) \le 0$ in this case. 
Since $u_i^{t_i}(t_i, \cdot)$ is non-negative always, we have 
$\inf_{\vec{\dc}_{-i}} u_i^{t_i}(t_i,\vec{\dc}_{-i}) \ge 0 \ge u_i^{t_i}(c_i,\wood_{-i}) \geq \inf_{\vec{\dc}_{-i}} u_i^{t_i}(c_i,\vec{\dc}_{-i})$ for all $c_i \in [0,B]$. Thus (\ref{eq:WNOM}) holds.
\end{proof}

\subsection{Approximation Mechanism for Subadditive Valuations}
\label{subsec:MaxOrWW}

In this section, we derive a mechanism that is IR, BF and NOM and achieves an apporoximation guarantee of $\max\{2, \gamma\}$ for the general class of subadditive valuation functions. 

\minisec{Composed Mechanism}
The core idea underling our mechanism is as follows: The mechanism first checks whether there is an agent $i^{\star}$ who is ``valuable enough'' to be selected on their own, roughly compared to the optimal total value that all other agents can generate. If this is the case, the mechanism selects agent $i^{\star}$ and pays the entire budget $B$ (regardless of the declared cost $c_{i^{\star}} \in [0, B]$). Otherwise, it calls the $\wwm$ introduced above (see Mechanism \ref{mech:ww}). As it turns out, this composition allows us to prove attractive approximation guarantees.\footnote{Note that running either one of these two mechanisms alone does not provide any non-trivial approximation guarantee.}
The resulting mechanism is referred to as $\textsc{MaxOr}\wwm$ and given in Mechanism~\ref{mech:maxorww}. 

\begin{mechanism}[t]
\caption{$\textsc{MaxOr}\wwm(I)$}\label{mech:maxorww}

\nonl \hspace*{-1em} $\rhd$ {\bf{Input:}~~} instance $I=(N, \vec{c}, V_{\textsc{sub}}, B)$

Let $i^{\star}= \argmax_{i \in N}\frac{V(\{i\})}{V(N \setminus \{i\})}$\label{line:istar}

\If{$V(\{i^{\star}\}) \geq V(N \setminus \{i^{\star}\})$}{\label{alg:maxorww:if}
$x_{i^{\star}}=1,p_{i^{\star}}=B$\\
$x_i=0, p_i=0$ for all $i \in N \setminus \{i^{\star}\}$\\
}
\lElse{
    $(\vec{x}, \vec{p}) = \wwm(I)$
}
\Return{$(\vec{x}, \vec{p})$}
\end{mechanism}

\begin{restatable}{theorem}{thmwwguarantee}
\label{thm:ww-guarantee}
$\textsc{MaxOr}\wwm$ is an individual rational, budget-feasible and not obviously manipulable mechanism that achieves a $\max\{2,\gamma\}$-approximation guarantee for subadditive valuation functions, where $\gamma$ is the approximation guarantee of the algorithm $\apx$ used in $\wwm$. 
\end{restatable}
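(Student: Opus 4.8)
The plan is to verify the four claimed properties of $\textsc{MaxOr}\wwm$ — individual rationality (IR), budget-feasibility (BF), not obvious manipulability (NOM), and the $\max\{2,\gamma\}$-approximation — in turn, reusing Lemmas~\ref{lem:ww-bf-ir} and~\ref{lem:ww-nom-bf-ir} for the branch that calls $\wwm$. For IR and BF, the only new case is the ``if'' branch, where a single agent $i^{\star}$ is selected and paid exactly $B$: since $c_{i^{\star}}\le B$ by our standing assumption, IR holds, and the total payment is $B$, so BF holds. The ``else'' branch inherits both properties from Lemma~\ref{lem:ww-bf-ir}.

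The crucial observation for NOM is that the branch executed by $\textsc{MaxOr}\wwm$ depends only on the valuation $V$ (through $i^{\star}$ and the test $V(\{i^{\star}\})\ge V(N\setminus\{i^{\star}\})$) and not on the declared cost profile $\vec{c}$. Hence, for a fixed $V$, the mechanism always runs the same branch as the costs vary, which is exactly the regime over which the suprema and infima in \eqref{eq:BNOM} and \eqref{eq:WNOM} range. If that branch is the ``if'' branch, then $i^{\star}$ is always selected and paid $B$ (so $u_{i^{\star}}^{t_{i^{\star}}}\equiv B-t_{i^{\star}}$ independently of the declaration) while every other agent is never selected (utility identically $0$); in both cases the relevant suprema and infima are constant across declarations, so BNOM and WNOM hold with equality. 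If the branch is the ``else'' branch, the mechanism coincides with $\wwm$ for every $\vec{c}$, and NOM follows directly from Lemma~\ref{lem:ww-nom-bf-ir}.

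The approximation bound is the heart of the argument and is driven by the same branching. In the ``if'' branch we output $X=\{i^{\star}\}$, and subadditivity together with the branch condition gives $V(X^{*})\le V(N)\le V(\{i^{\star}\})+V(N\setminus\{i^{\star}\})\le 2V(\{i^{\star}\})$. In the ``else'' branch we run $\wwm$ under the hypothesis $V(\{i\})<V(N\setminus\{i\})$ for every $i$ (which holds for all $i$ since $i^{\star}$ maximizes the ratio $V(\{i\})/V(N\setminus\{i\})$ and the test fails at $i^{\star}$), and I would bound the approximation for each of the three outputs of $\wwm$ separately. If no golden ticket or wooden spoon applies, $\wwm$ runs $\apx$ and the guarantee is $\gamma$. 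If agent $i$ gets a golden ticket, the profile is $(0,\dots,0,c_i,B,\dots,B)$ and the output is $X=\{1,\dots,i\}$ (with $\{1,\dots,n\}$ the $\wwm$-renaming by decreasing singleton value); here I would enumerate the feasible optima — either $\{1,\dots,i\}$, or $\{1,\dots,i-1\}$ plus one expensive agent $j>i$, or (only when $c_i=0$) $\{1,\dots,i\}$ plus such a $j$ — and in each case invoke subadditivity and the renaming $V(\{j\})\le V(\{i\})\le V(\{1,\dots,i\})$ to get $V(X^{*})\le 2V(X)$, the case $i=n$ being exact since then $X=N$. The wooden-spoon case splits analogously: for $i<n$ the output is $X=N\setminus\{i\}$ against optimum $X^{*}=N$, where subadditivity and the hypothesis $V(\{i\})<V(N\setminus\{i\})$ yield $V(N)<2V(N\setminus\{i\})$; for $i=n$ the output is $X=\{1\}$ and a short enumeration (at most one expensive agent fits, plus possibly the free agent $n$ when $c_n=0$) together with the renaming gives $V(X^{*})\le 2V(\{1\})$.

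The main obstacle will be the golden-ticket and wooden-spoon sub-cases of the ``else'' branch: on these rigid profiles the output of $\wwm$ can be far from what $\apx$ would return, so the bound must come entirely from the profile structure, subadditivity, and the singleton-value ordering of the renaming. It is precisely in the $i<n$ wooden-spoon sub-case that the hypothesis $V(\{i\})<V(N\setminus\{i\})$ becomes indispensable, which is exactly why composing $\wwm$ with the ``$\textsc{Max}$'' rule is needed for the argument to close.
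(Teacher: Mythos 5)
Your proposal is correct and follows essentially the same route as the paper's proof: the same branch decomposition, the same cost-independence observation for NOM, and the same subadditivity-plus-renaming bounds in each of the three $\wwm$ sub-cases. The only (cosmetic) differences are that you extract the pointwise hypothesis $V(\{i\}) < V(N\setminus\{i\})$ for all $i$ up front rather than carrying the ratio comparison with $i^{\star}$ inline, and you organize the sub-cases by which branch of $\wwm$ actually executes, which absorbs the all-zero-profile edge case into the golden ticket of agent $n$ exactly as the paper does explicitly.
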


\begin{proof}
   Let $I=(N, \vec{c}, V_{\textsc{sub}}, B)$ be an instance with subadditive valuations $V_{\textsc{sub}}$. For brevity, let $V = V_{\textsc{sub}}$. 

    Note that whether the mechanisms runs the mechanism in the \textbf{if}-part or $\wwm(I)$ does not depend on the declared costs. In the former case, IR and BF hold by construction. Also, the utility of each agent is constant in this case and hence the mechanism is NOM. In the latter case, IR, BF and NOM are inherited from the $\wwm$ mechanism, as proven in Lemma \ref{lem:ww-nom-bf-ir}. It remains to prove that the approximation guarantee is $\max\{2, \gamma\}$.
    
   First, consider the case that $V(\{i^{\star}\}) \geq V(N \setminus \{i^{\star}\})$. 
   Then $X(\vec{c})=\{i^{\star}\}$. By the monotonicity and subadditivity of $V$, we have that $V(X^{*}(\vec{c})) \leq V(N) \leq V(N \setminus \{i^{\star}\}) + V(\{i^{\star}\}) \leq 2 \cdot V(\{i^{\star}\}) = 2 \cdot V(X(\vec{c})).$ 
Thus, the mechanism achieves a $2$-approximation in this case. 

Consider the case $V(\{i^{\star}\}) < V(N \setminus \{i^{\star}\})$. Then the $\wwm$ mechanism is run on $I$. Let $(\vec{x}(\vec{c}), \vec{p}(\vec{c}))$ be the output computed by $\wwm(I)$.
Below, all line numbers refer to $\wwm$.
We distinguish the following cases based on the profile $\vec{c}$.

\smallskip
\textbf{Case 1.~}
    There is an agent $i \in N$ such that $c_i \in [0,B)$ and $\vec{\dc}_{-i} = \gold_{-i}$. The outcome of the mechanism is determined by Lines \ref{alg:gti}--\ref{alg:ww:gt}, i.e., 
    $X(\vec{c})=\{1,\dots, i\}$.
    Note that, by the definition of $\gold_{-i}$, the optimal allocation $X^*(\vec{c})$ contains at most one agent $j \in \{i+1,\dots, n\}$, i.e., $X^*(\vec{c}) \subseteq \{1, \dots, i\} \cup \{j\}$. We thus obtain $V(X^*(\vec{c})) \leq V(\{1,\dots, i\} \cup \{j\}) \leq V(\{1,\dots, i\}) + V(\{j\}) \leq V(\{1,\dots, i\}) + V(\{i\}) \leq 2 \cdot V(\{1 ,\dots, i\})= 2 \cdot V(X(\vec{c})).$ 
    The first and last inequality hold by the monotonicity of $V$, the second inequality by the subadditivity of $V$, and the third inequality by the ordering of the agents on Line \ref{alg:rename} of $\wwm$. 

\smallskip
\textbf{Case 2.~} There is an agent $i \in N$ such that $c_i \in [0, B)$ and $\vec{\dc}_{-i} = \wood_{-i}$. 
    If $c_i = 0$ and $i \neq n$, then $\vec{c} = (c_i, \wood_{-i})$ is the all-zero profile which coincides with the golden ticket of agent $n$ with $c_n = 0$. In this case, the output is determined by the golden ticket of agent $n$ which has been analyzed in Case 1 above. 
    Otherwise, the outcome of the mechanism is determined by handling the wooden spoon of $i$ in 
    Lines~\ref{alg:wsi}--\ref{alg:wsend}.
    
    If $i=n$, by the definition of $\wood_{-n}$, the optimal allocation $X^*(\vec{c})$ contains at most one agent $j \in \{1,\dots, n-1\}$. 
    In addition, when $c_n=0$ it also contains agent $n$ due to the monotonicity of $V$. On the other hand, $X(\vec{c})=\{1\}$. Therefore, $V(X^*(\vec{c})) \leq V(\{j,n\}) \leq V(\{j\}) + V(\{n\}) \leq 2\cdot V(\{1\})=2 \cdot V(X(\vec{c})).$ 
    The second inequality follows from the subadditivity of $V$ and the third inequality by the ordering of agents on Line \ref{alg:rename} of $\wwm$.
    
    If $i \not= n$, then $X(\vec{c})=N \setminus \{i\}$ and $X^*(\vec{c}) = N$. Clearly, $V(X^*(\vec{c})) = V(N) \leq V(N \setminus \{i\}) + V(\{i\}) = V(N \setminus \{i\}) \cdot (1 + V(\{i\})/V(N \setminus \{i\}) ) \leq V(N \setminus \{i\}) \cdot (1 + V(\{i^{\star}\})/V(N \setminus \{i^{\star}\})) \leq 2 \cdot V(N \setminus \{i\})=2 \cdot V(X(\vec{c})).$ 
    
    The first inequality follows from the subadditivity of $V$, the second inequality from the definition of $i^{\star}$ in Line~\ref{line:istar} of $\textsc{MaxOr}\wwm$, and the last inequality holds because $V(\{i^{\star}\}) < V(N \setminus \{i^{\star}\})$ by assumption. 

 \smallskip
\textbf{Case 3.~} If none of the above cases hold, the outcome of the mechanism is determined by Lines \ref{alg:opt}--\ref{alg:ww:pab}, and thus
    $V(X^*(\vec{c})) \leq \gamma \cdot V(X_{\apx}(\vec{c})) = \gamma \cdot V(X(\vec{c}))$.
    
This concludes the proof that the approximation guarantee of the mechanism is $\max\{2, \gamma\}$. 
\end{proof}

\vspace*{-.3cm}

\minisec{Computational constraints}
We elaborate on the the trade-off that our $\textsc{MaxOr}\wwm$ mechanism implies in terms of achievable approximation guarantees versus computational efficiency. Setting computational considerations aside, we may assume access to an algorithm $\apx$ that solves the packing problem in \eqref{eq:opt-packing} optimally for monotone subadditive functions. In this scenario, $\gamma=1$, and therefore $\textsc{MaxOr}\wwm$ achieves an approximation ratio of $\max\set{2, \gamma} = 2$. By combining this fact with the best-known lower bound of $1 + \sqrt{2} \approx 2.41$ for deterministic budget-feasible mechanisms that are  individually rational and strategyproof (due to \citet{chen11}), we obtain a separation result in terms of achievable approximation guarantees between the classes of strategyproof and not obviously manipulable mechanisms. Note that the result of \citet{chen11} also holds without computational constraints and is for additive valuations. In conclusion, we have shown that not obviously manipulable mechanisms perform strictly better in terms of approximation than their strategyproof counterparts.

When polynomial running time of the mechanism is a desideratum, a negative result by \citet{singer10} implies that for monotone fractionally subadditive valuation functions (being a subclass of monotone subadditive) $\gamma = \Omega(n)$, as otherwise an exponential number of queries to $V$ would be required. In particular, this implies that the approximation guarantee of $\textsc{MaxOr}\wwm$ becomes $\Omega(n)$ as well.\footnote{Note that always choosing the agent of maximum singleton valuation gives a trivial $n$-approximation mechanism.} 
However, under the assumption of having access to a stronger demand oracle, the work of \citet{badanidiyuru19} provides an approximation algorithm with $\gamma = 2 + \varepsilon$ for arbitrarily small $\varepsilon > 0$ for the packing problem in \eqref{eq:opt-packing} with a monotone subadditive valuation function. That is, in this case $\textsc{MaxOr}\wwm$ achieves an approximation factor of $2 + \varepsilon$. Finally, the class of monotone subadditive functions includes the monotone submodular functions for which there is an approximation algorithm with $\gamma = \nicefrac{e}{e-1} \approx 1.58$. As a result, $\textsc{MaxOr}\wwm$ achieves an approximation factor of $2$ for monotone submodular valuations.

\minisec{Extensions to more complex feasibility constraints}
We argue that our design template of using golden tickets and wooden spoons, as defined in our WillyWonka mechanism, is versatile enough to handle more complex packing problems.
Suppose that instead of the simple packing problem in \eqref{eq:opt-packing-ext}, we consider the following \emph{general packing problem}:
\begin{equation}\label{eq:opt-packing-ext}
\max \ \ V(X) \quad \text{s.t.} \quad \sum_{i \in X} c_i x_i \leq B, \quad X \in \mathcal{F}(N). 
\end{equation}
Here $\mathcal{F}\subseteq 2^N$ may encode arbitrary restriction. 
Note that this captures our original problem if $\mathcal{F}(N) = 2^N$. But other feasibility restrictions are conceivable of course. For example, $\mathcal{F}(N)$ could be used to model pairwise conflicts (or dependencies) among the agents. 

The question we are asking here is the following one: Leaving computational restrictions aside, which properties of $\mathcal{F}$ ensure that our mechanism $\textsc{MaxOr}\wwm$ goes through as before, possibly providing an inferior approximation guarantee? 
A moment’s thought reveals that all that is required is verifying how the structure of $\mathcal{F}$ impacts the approximation guarantee obtained by issuing our golden tickets or wooden spoons.

Define the following allocations with respect to some agent set $S \subseteq N$: 
\begin{align*}
    \opt(S) & = \arg\max \sset{V(X)}{X \subseteq S, X \in \mathcal{F}}.\\ 
    \opt_{+i}(S) & = \arg\max \sset{V(X)}{X \subseteq S, X \in \mathcal{F}, i \in X}. \\
    \opt_{-i}(S) & = \arg\max \sset{V(X)}{X \subseteq S, X \in \mathcal{F}, i \notin X}.
\end{align*}

Consider the golden ticket $\vec{c} = (c_i, \gold_{-i})$ of agent $i$. 
Then, when issuing the golden ticket for agent $i$, we choose the optimal allocation $X(\vec{c}) = \opt_{+i}([i])$ forcing $i$ in and pay $B$ to agent $i$ as before; possibly $X(\vec{c})$ contains agent $i$ only.
Note that $X^*(c) \subseteq [i] \cup \set{j}$ for some $j > i$ (if any). 
Thus, exploiting monotonicity and subadditivity of $V$, we obtain 
\[
V(X^*(\vec{c})) \le V(X^*(\vec{c}) \cap [i]) + V(\set{j}) 
\le \delta V(X(\vec{c})) + V(\set{i})
\le (\delta + 1) V(X(\vec{c})).
\]
Here the inquality holds if we define the \emph{agent-forcing gap $\delta \ge 1$} such that 
\[
\forall S \subseteq N, \ \forall i \in S: \ \opt_{+i}(S) \ge \frac{1}{\delta} \opt(S). 
\]
Intuitively, the agent-forcing gap measures how much we lose in the worst case by forcing a single agent $i$ into the solution. 
Note that for our original model with $\mathcal{F}(N) = 2^N$, we have $\delta = 1$. 

Next, consider the wooden spoon $\vec{c} = (c_i, \wood_{-i})$ of agent $i$. When issuing the wooden spoon for agent $i$, we choose the optimal allocation $X(\vec{c}) = \opt_{-i}(N)$ forcing $i$ out and define the payments as before. 
Going through the same analysis as in the proof above, it turns out that in the worst case we are losing a factor of $2$ in the approximation guarantee (as before).

The only additional change we have to implement is the definition of the agent choice $i^{\star}$: Let $i^{\star} = \arg\max_{i \in N} \frac{V(\set{i})}{V(\opt_{-i})}$. With this, we are guaranteed that if $i^{\star}$ is output alone, the approximation guarantee is still at most $2$. 

We conclude that the adapted 
$\textsc{MaxOr}\wwm$ mechanism has a final approximation guarantee of $\max(2, \delta + 1, \gamma)$, where $\delta$ is the agent-forcing gap as defined above and $\gamma$ refers to the approximation ratio of the algorithm used to solve the underlying packing problem.

\section{Characterization of NOM and Impossibility Results} 
\label{sec:characterization}

In this section, we consider BNOM and WNOM separately, and provide characterizations for the class of IR, NP and BNOM mechanisms (Section \ref{sec:bnomchar}), as well as the IR, NP and WNOM mechanisms (Section \ref{sec:wnomchar}). Using our BNOM characterization, we show that the approximation factor that \textsc{MaxOr}\wwm\ (Mechanism \ref{mech:maxorww}) achieves cannot be improved, even for the class of additive valuations. For our WNOM characterization we establish a weaker lower bound of $\varphi = (1 + \sqrt{5})/2$ (golden ratio) on the achievable approximation factor for additive valuation, which we show to be the best possible for $3$ agents, by designing a mechanism for this case that has a matching approximation factor.

\subsection{BNOM Characterization and Approximation Bound of 2}\label{sec:bnomchar}

We provide a sufficient and necessary condition for mechanisms satisfying BNOM, provided that they satisfy the NP and IR properties. In words, the property states that for each agent $i \in N$ there is a particular threshold $b_i \in [0,B]$ such that: (1)    declaring a type strictly above $b_i$ guarantees $i$ to not get selected,
(2)    when declaring any type strictly below $b_i$, there exists a bid profile of the remaining agents under which $i$ gets selected and receives a payment of $b_i$,
(3)    when declaring type $b_i$, either of the above two cases apply.
We formalize the above as follows.
\begin{definition}\label{def:threshold-gt-payments}
    A mechanism $\mech$ uses \emph{threshold golden tickets} iff 
    \begin{align}
    \forall i \in N : \exists b_i \in [0,B] : & \quad \left(\rule{0ex}{3ex} \forall c_i \in (b_i,B] : \sup_{\vec{c}_{-i}} x_i(c_i,\vec{c}_{-i}) = 0\right) \wedge 
    \left(\forall c_i \in [0,b_i) : \sup_{\vec{c}_{-i}} p_i(c_i,\vec{c}_{-i}) = b_i \right) \notag \\
        & \, \wedge \left(\sup_{\vec{c}_{-i}} p_i(b_i,\vec{c}_{-i}) = b_i \vee \sup_{\vec{c}_{-i}} x_i(b_i,\vec{c}_{-i}) = 0\right) \label{eq:threshold-gt-payments} .
\end{align}
\end{definition}

\begin{restatable}{proposition}{propbnomchar}
\label{prop:bnomchar}
   A mechanism $\mech$ that satisfies NP and IR is BNOM if and only if $\mech$ uses threshold golden tickets.
\end{restatable}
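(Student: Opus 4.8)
The plan is to reformulate BNOM as a statement purely about the best-case payment of each agent and then match it, term by term, against Definition~\ref{def:threshold-gt-payments}. Fix an arbitrary agent $i$ (all conditions are per-agent, so it suffices to argue for one). Write $G_i(c_i) := \sup_{\vec{c}_{-i}} p_i(c_i, \vec{c}_{-i})$ for the best-case payment when $i$ declares $c_i$, and $G_i^* := \sup_{c_i \in [0,B]} G_i(c_i)$ for its overall supremum; by BF, $G_i^* \le B$. The first step is to evaluate both sides of the BNOM inequality \eqref{eq:BNOM} using NP and IR. On any profile, $i$ is either selected---contributing utility $p_i - t_i$ with $p_i \ge c_i$ by IR---or rejected, contributing utility $0$ by NP. Splitting the supremum over $\vec{c}_{-i}$ along this dichotomy, I would show that the best-case utility of truth-telling is $\sup_{\vec{c}_{-i}} u_i^{t_i}(t_i, \vec{c}_{-i}) = \max\{0,\, G_i(t_i) - t_i\}$ and that the best-case utility over \emph{all} declarations is $\sup_{c_i}\sup_{\vec{c}_{-i}} u_i^{t_i}(c_i, \vec{c}_{-i}) = \max\{0,\, G_i^* - t_i\}$. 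The floor of $0$ in both expressions is legitimate because declaring $B$ always secures utility at least $0$: if selected, IR and BF force the payment to equal exactly $B$, giving $B - t_i \ge 0$; if rejected, the utility is $0$.

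With this in hand, BNOM for $i$ is exactly the requirement $\max\{0,\, G_i(t_i) - t_i\} = \max\{0,\, G_i^* - t_i\}$ for every $t_i \in [0,B]$. Since $G_i(t_i) \le G_i^*$ the left side never exceeds the right, so only the reverse inequality is at stake. A one-line case split on the sign of $G_i^* - t_i$ collapses the condition: when $t_i \ge G_i^*$ both sides vanish automatically, while for $t_i < G_i^*$ equality forces $G_i(t_i) = G_i^*$. Thus BNOM is equivalent to the clean statement that $G_i(c_i) = G_i^*$ for all $c_i \in [0, G_i^*)$ (note $[0,G_i^*) \subseteq [0,B]$ since $G_i^* \le B$).

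The final step is to identify this clean condition with threshold golden tickets by taking $b_i := G_i^*$. Condition~(2) of Definition~\ref{def:threshold-gt-payments} is then verbatim the clean condition $G_i(c_i) = b_i$ on $[0,b_i)$. For condition~(1) I would argue that $i$ cannot be selectable for any $c_i > b_i$: otherwise IR would force $p_i \ge c_i > G_i^*$ on some profile, contradicting that $G_i^*$ is the overall supremum of payments. For the boundary condition~(3) at $c_i = b_i$, either $i$ is not selectable there (second disjunct), or it is, in which case IR gives $G_i(b_i) \ge b_i$ while $G_i(b_i) \le G_i^* = b_i$ forces $G_i(b_i) = b_i$ (first disjunct). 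Conversely, assuming threshold golden tickets with threshold $b_i$, conditions~(1)--(3) give $G_i(c_i) = b_i$ for $c_i < b_i$, $G_i(c_i) = 0$ for $c_i > b_i$, and $G_i(b_i) \in \{0, b_i\}$; together these yield $G_i^* = b_i$ and hence the clean condition, so BNOM follows. The same threshold $b_i = G_i^*$ therefore settles both directions at once.

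\textbf{Main obstacle.} The delicate point is the careful evaluation of the suprema in the first step, not any deep idea. One must correctly handle the degenerate case in which $i$ is selected on \emph{every} opposing profile (so that no rejecting profile supplies the value $0$), and check that the $\max\{0,\cdot\}$ floor is still correct---this works precisely because IR guarantees $G_i(t_i) \ge t_i$ whenever $i$ is selectable at $t_i$, so the selected-only value $G_i(t_i) - t_i$ is already non-negative. Getting this bookkeeping right, and being careful that the suprema need not be attained (which is why the definition is phrased with $\sup$ throughout), is what makes the reformulation rigorous.
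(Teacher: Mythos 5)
Your proposal is correct in substance, but it takes a genuinely different route from the paper's. The paper argues in two stages: it first characterizes BNOM for all NP mechanisms (not necessarily IR) via the intermediate notion of \emph{restricted golden ticket payments} (Lemma~\ref{lem:bnomchar}, condition \eqref{eq:restricted-gt-payments}), and only then layers IR on top, through the upward-closedness property (P1) of the ``never selected'' types, to extract the threshold of Definition~\ref{def:threshold-gt-payments}. You instead work in one pass: you reduce BNOM to the ``flat supremum'' condition $G_i(c_i)=G_i^*$ on $[0,G_i^*)$, where $G_i(c_i)=\sup_{\vec{c}_{-i}}p_i(c_i,\vec{c}_{-i})$ and $G_i^*=\sup_{c_i}G_i(c_i)$, and the threshold is exhibited explicitly as $b_i=G_i^*$, so the same object witnesses both directions. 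Besides being shorter, your sup-only formulation is more robust on a point where the paper's route is shaky: the second disjunct of \eqref{eq:restricted-gt-payments} demands that the supremum payment be \emph{attained} at some profile, and the paper's derivation of it (concluding a strict inequality between suprema from pointwise strict inequalities) fails when the supremum is approached but never attained. Concretely, with two agents, selecting agent $1$ and paying $\max\{c_1,c_2\}$ whenever $c_2<B$, and rejecting agent $1$ when $c_2=B$, yields an NP, IR, BF, BNOM mechanism that violates \eqref{eq:restricted-gt-payments}, yet satisfies \eqref{eq:threshold-gt-payments} with $b_1=B$, exactly as your argument predicts. What the paper's two-stage route buys instead is a standalone characterization of BNOM under NP alone, which your argument does not produce (and does not need).

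One caveat you should make explicit: you invoke BF, although the proposition grants only NP and IR. Your first use (declaring $B$ secures nonnegative utility) is avoidable, since IR alone gives $p_i\ge B\ge t_i$ whenever the agent is selected at declaration $B$. The second use is essential: you need $G_i^*\le B$ so that your threshold $b_i=G_i^*$ lies in $[0,B]$ as Definition~\ref{def:threshold-gt-payments} requires. This is less a defect of your argument than of the statement itself: without some bound on payments the proposition is false (a mechanism that always selects agent $i$ and always pays $2B$ is NP, IR and BNOM, yet admits no valid threshold in $[0,B]$), and the paper's own proof silently needs the same bound --- its claim that Line 1 of \eqref{eq:restricted-gt-payments} yields $\sup_{\vec{c}'}p_i(\vec{c}')\le b_i$ is vacuous precisely when the agent is selectable at every type, i.e.\ when $b_i=B$, which is the case the counterexample exploits. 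Since the proposition is only ever applied to BF mechanisms (e.g.\ in Theorem~\ref{thm:gt-simple}), your proof matches the intended setting; just record BF (or ``payments bounded by $B$'') as an explicit hypothesis.
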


It can be seen that \textsc{MaxOr}\wwm~(Mechanism \ref{mech:maxorww}), which is NOM, and hence BNOM, uses threshold golden tickets with $b_i=B$ for all $i \in N$: If $i$ declares anything less than $B$, then $i$ receives a payment of $B$ when the profile of the other bidders is $\vec{c}_{-i}^{GT}$. Furthermore, if $i$ declares $B$, then by IR the payment to $i$ is at least $B$, in case $i$ gets selected.

The proof of Proposition \ref{prop:bnomchar} uses an intermediate characterization of BNOM, for the wider class of mechanisms that satisfy NP (and not necessarily IR). The latter characterisation consists of two parts: 
\begin{itemize}\itemsep0pt
    \item If there exists a type $c_i \in [0,B]$ for Agent $i \in N$ such that $i$ is guaranteed to not get selected when declaring $c_i$ (i.e., regardless of the declarations of the other agents), then the maximum payment that the mechanism can award to $i$ across all profiles is $c_i$.
    \item For every type $c_i \in [0,B]$ of Agent $i$ such that $i$ \emph{can} get selected when declaring $c_i$, the highest payment $i$ can receive when declaring $c_i$ is equal to the highest payment that the mechanism can give to $i$ across all  profiles.
\end{itemize} 
We formalise the above as follows.
\begin{definition}\label{def:restricted-gt-payments}
A mechanism $\mech$ uses \emph{restricted 
 golden ticket payments} iff
\begin{align}\label{eq:restricted-gt-payments}
    \forall i \in N, c_i \in [0,B] : \qquad & \left(\forall \vec{c}_{-i} \in [0, B]^{n-1} : x_i(c_i, \vec{c}_{-i})=0 \wedge \sup_{\vec{c'}} p_i(\vec{c}') \leq c_i \right) \vee \notag \\
    & \left(\exists \vec{c}^*_{-i}\in [0,B]^{n-1} : x_i(c_i, \vec{c}_{-i}^*)=1 \wedge p_i(c_i, \vec{c}_{-i}^*) = \sup_{\vec{c}'} p_i(\vec{c}')\right) .
\end{align}
\end{definition}

\begin{restatable}{lemma}{bnomchar}
\label{lem:bnomchar}
    A mechanism $\mech$ that satisfies normalized payments is BNOM if and only if $\mech$ uses restricted golden ticket payments.
\end{restatable}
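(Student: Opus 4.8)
The plan is to prove both directions of the equivalence by carefully unpacking the BNOM condition \eqref{eq:BNOM} under the assumption of normalized payments (NP). The key observation is that BNOM compares, for each agent $i$ and true type $t_i$, the best-case utility from truth-telling against the best-case utility from any misreport $c_i$; since the quantity $\sup_{\vec{c}_{-i}} u_i^{t_i}(c_i, \vec{c}_{-i})$ has the form $\sup_{\vec{c}_{-i}} \bigl( p_i(c_i, \vec{c}_{-i}) - t_i \cdot x_i(c_i, \vec{c}_{-i}) \bigr)$, the whole condition really concerns the interplay between the \emph{maximum achievable payment} at each declared cost and whether the agent can be selected at all. I would first introduce convenient shorthand: for each $i$ and each $c_i$, write $P_i(c_i) = \sup_{\vec{c}_{-i}} p_i(c_i, \vec{c}_{-i})$ for the best payment attainable at declaration $c_i$, and $P_i = \sup_{\vec{c}'} p_i(\vec{c}')$ for the global best payment. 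Under NP, an unselected agent gets payment $0$, so these suprema are governed entirely by profiles where $i$ is selected.

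For the forward direction (BNOM $\Rightarrow$ restricted golden ticket payments), I would fix $i$ and a declaration $c_i$ and split into two cases according to whether there exists some $\vec{c}_{-i}$ with $x_i(c_i,\vec{c}_{-i})=1$. If $i$ can never be selected at $c_i$, then by NP every payment at $c_i$ is $0$, so the best-case utility of truthfully declaring $t_i = c_i$ is exactly $0$; BNOM applied with this true type then forces $\sup_{\vec{c}_{-i}} u_i^{c_i}(\hat c_i, \vec{c}_{-i}) \le 0$ for every alternative $\hat c_i$, and by choosing $\hat c_i$ to realize payments close to $P_i$ one extracts $P_i - c_i \le 0$, i.e.\ $P_i \le c_i$ — this is the first disjunct of \eqref{eq:restricted-gt-payments}. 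If instead $i$ can be selected at $c_i$, I must show the best payment at $c_i$ equals the global best payment $P_i$ and is actually attained at some profile where $i$ is selected; here I would again invoke BNOM (now with the true type set appropriately) to show that the best-case utility at $c_i$ cannot fall short of the best-case utility achievable at the declaration realizing $P_i$, which pins $P_i(c_i) = P_i$, and combine this with the existence of a selecting profile and NP to upgrade the supremum to an attained maximum, yielding the second disjunct.

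For the reverse direction, I would assume restricted golden ticket payments and verify \eqref{eq:BNOM} directly. Fixing $i$ and true type $t_i$, I would evaluate $\sup_{\vec{c}_{-i}} u_i^{t_i}(t_i, \vec{c}_{-i})$ by applying the definition at $c_i = t_i$: if the first disjunct holds then $P_i \le t_i$ so the truthful best-case utility is at least $0$ (take $i$ unselected) while \emph{every} misreport yields best-case utility at most $P_i - 0 \le t_i \le$ the truthful value by a short bound; if the second disjunct holds then there is a selecting profile realizing payment $P_i$, so truth-telling achieves best-case utility exactly $P_i - t_i$, which dominates the best-case utility $\le P_i - t_i$ of any misreport. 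In both cases the truthful supremum dominates every alternative, establishing BNOM.

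The main obstacle I anticipate is the careful handling of suprema that may fail to be attained. The definition phrases things with $\sup$, and the equivalence must hold even when no maximizing profile exists, so in the forward direction I cannot simply ``pick the optimal $\vec{c}_{-i}$''; I expect to need an $\varepsilon$-approximation argument (choose profiles achieving payment within $\varepsilon$ of the relevant supremum, derive the inequality, then let $\varepsilon \to 0$) to transfer the BNOM inequality into the clean statements $P_i \le c_i$ and $P_i(c_i) = P_i$. Reconciling this with the \emph{attained} equality $p_i(c_i,\vec{c}^*_{-i}) = \sup_{\vec{c}'} p_i(\vec{c}')$ demanded in the second disjunct — rather than a mere supremum — is the subtle point, and I would argue that whenever $i$ is selectable at $c_i$, the selecting profile together with $P_i(c_i)=P_i$ forces the payment to actually meet the global supremum at some concrete profile.
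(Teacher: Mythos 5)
Your plan mirrors the paper's proof in structure: the forward direction splits on whether agent $i$ can ever be selected when declaring $c_i$, the reverse direction splits on which disjunct of \eqref{eq:restricted-gt-payments} applies, and BNOM is invoked with the true type set equal to the declaration at hand; your first forward case (never selected at $c_i$, hence $\sup_{\vec{c}'} p_i(\vec{c}') \le c_i$) is correct and matches the paper. The genuine gap is exactly the step you defer to the end: upgrading $P_i(c_i)=P_i$ (in your shorthand $P_i(c_i)=\sup_{\vec{c}_{-i}} p_i(c_i,\vec{c}_{-i})$, $P_i=\sup_{\vec{c}'}p_i(\vec{c}')$) plus the existence of a selecting profile to an \emph{attained} maximum. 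That implication is false as a standalone claim. Take $n=2$, let agent $2$ never be selected and always be paid $0$, and let $x_1\equiv 1$ with $p_1(c_1,c_2)=B-c_2$ for $c_2>0$ and $p_1(c_1,0)=0$. This mechanism satisfies NP, and it is BNOM, since both sides of \eqref{eq:BNOM} equal $B-t_1$ for every true type and every misreport of agent $1$; moreover $P_1(c_1)=P_1=B$ for every $c_1$ and selecting profiles exist everywhere, yet no profile attains payment $B$, so the second disjunct of \eqref{eq:restricted-gt-payments} fails (as does the first). So the ``forcing'' you appeal to cannot be argued from what you have; attainment is where the real work lies. For comparison, the paper handles this case by contradiction: negating the second disjunct gives that every selecting profile at $c_i$ pays strictly less than $\sup_{\vec{c}'}p_i(\vec{c}')$, from which it concludes $\sup_{\vec{c}_{-i}}p_i(c_i,\vec{c}_{-i})<\sup_{\vec{c}'}p_i(\vec{c}')$ and hence a BNOM violation --- but note that this passage from pointwise strict inequalities to a strict inequality between suprema is the very same attainment issue (the example above shows it needs care), so simply imitating the paper would not close your gap either.

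Your reverse direction has a second flaw, in the second-disjunct case: you bound the best-case utility of any misreport by $P_i-t_i$, but under NP a misreport that results in non-selection yields utility exactly $0$, not $p_i-t_i$. Since the lemma assumes NP only (IR is not available here; it enters only in Proposition~\ref{prop:bnomchar}), nothing prevents $P_i<t_i$, and then the claimed domination fails. Concretely: agent $1$ is selected and paid $B/4$ whenever $c_1\le B/2$, and is unselected and paid $0$ whenever $c_1>B/2$ (agent $2$ never selected, paid $0$). This satisfies NP and \eqref{eq:restricted-gt-payments}, yet for true type $t_1=B/2$ truth-telling gives utility $-B/4$ at every profile while declaring $B$ gives $0$, violating BNOM. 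So the misreport side must be bounded by $\max\{P_i-t_i,\,0\}$, and making the truthful side reach $0$ in the regime $P_i<t_i$ is impossible from NP alone; it is IR that would rescue this step, since a selecting profile at $t_i$ would then pay at least $t_i$ and force $P_i\ge t_i$. (The paper's own proof of this direction writes $\sup_{\vec{c}'}u_i^{c_i}(\vec{c}')=\sup_{\vec{c}'}(p_i(\vec{c}')-c_i)$, which silently ignores the zero-utility unselected profiles and thus glosses over the same point.) In short, your proposal takes the paper's route, but both of its delicate steps --- attainment of the payment supremum, and the treatment of unselected profiles without IR --- are asserted rather than proved, and neither assertion is valid as stated.
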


\begin{proof}[Proof (Lemma~\ref{lem:bnomchar})]
    ($\Rightarrow$) Suppose that $\mech$ satisfies normalized payments (NP) and BNOM. Let $i \in N$ and $c_i \in [0,B]$. We will show that the condition inside the first two quantifiers of (\ref{eq:restricted-gt-payments}) holds for $i$ and $c_i$. The statement obviously holds if $x_i(\vec{c}) = 0$ for all $\vec{c}$, therefore, we assume from here on that there exists a $\vec{c}$ for which $x_i(\vec{c})=1$. Suppose for contradiction that the condition does not hold, i.e.,
    \begin{equation}\label{eq:negation}
        \begin{split} 
        & \left(\exists \vec{c}_{-i} \in [0,B]^{n-1} : x_i(c_i, \vec{c}_{-i}) = 1 \vee \sup_{\vec{c'}} p_i(\vec{c'}) > c_i \right) \wedge \\
        & \left( \forall \vec{c}_{-i}^* \in [0,B]^{n-1} : x_i(c_i,\vec{c}_{-i}^*) = 0 \vee p_i(c_i,\vec{c}_{-i}^*) < \sup_{\vec{c'}} p_i(\vec{c'})\right) .
        \end{split}
    \end{equation}
    We split our analysis into two cases and consider first the case that $i$ never gets hired when declaring $c_i$. In that case, the above expression simplifies to $\sup_{\vec{c'}} p_i(\vec{c'}) > c_i$ .
    This implies 
    \begin{align*}
        & \sup_{\vec{c'}} u_i^{c_i}(\vec{c'}) = \sup_{\vec{c'}} (p_i(\vec{c'}) - c_i x_i(\vec{c'})) = \sup_{\vec{c'}} (p_i(\vec{c'}) - c_i) > 0 = \sup_{\vec{c}_{-i}} u_i^{c_i}(c_i,\vec{c}_{-i}) 
    \end{align*}
    and contradicts BNOM. Note that for the second inequality, we used NP along with the fact that there exists a $\vec{c}$ for which $x_i(\vec{c})=1$.

    Next, we consider the case that there exists a bid profile $\vec{c}_{-i}$ such that $x_i(c_i,\vec{c}_{-i}) = 1$. Define $S = \{\vec{c}_{-i} \in [0,B]^{n-1}\ :\ x_i(c_i,\vec{c}_{-i}) = 1\}$. Now, (\ref{eq:negation}) simplifies to $\forall \vec{c}_{-i}^* \in S : p_i(c_i,\vec{c}_{-i}^*) < \sup_{\vec{c'}} p_i(\vec{c'})$. The right-hand side of the latter inequality is positive by the fact that payments are non-negative, so by NP we obtain $\sup_{\vec{c}_{-i}} p_i(c_i,\vec{c}_{-i}) < \sup_{\vec{c'}} p_i(\vec{c'})$, and therefore we arrive at a contradiction to BNOM:
    \begin{align*}
    \sup_{\vec{c}'} u_i^{c_i}(\vec{c'}) = \sup_{\vec{c}'} p_i(\vec{c'}) - c_i > \sup_{\vec{c}_{-i}} p_i(c_i,\vec{c}_{-i}) - c_i = \sup_{\vec{c}_{-i}} u_i^{c_i}(c_i,\vec{c}_{-i}).
    \end{align*}

    ($\Leftarrow$) We suppose $\mech$ satisfies NP and restricted golden ticket payments. We will derive that $\mech$ also satisfies BNOM. Let $i \in N$ be an agent and let $c_i \in [0,B]$ be any type for $i$. We split the proof again into two cases. The first case is where $\mech$ never assigns the item to $i$ when they declare $c_i$. In this case, we see that, according to (\ref{eq:restricted-gt-payments}), $\sup_{\vec{c}'} p_i(\vec{c}') \leq c_i$.
    Thus, 
    \begin{align*}
        \sup_{\vec{c}'} u_i^{c_i}(\vec{c}') = \sup_{\vec{c}'} p_i(\vec{c}') - c_i \leq 0 = \sup_{\vec{c}_{-i}} u_i^{c_i}(c_i,\vec{c}_{-i}),
    \end{align*}
    i.e., the BNOM property holds for $i$ with type $c_i$.
    
    The second case we consider is that there is a bid profile $\vec{c}_{-i}$ such that $x_i(c_i,\vec{c}_{-i}) = 1$. Now, by (\ref{eq:restricted-gt-payments}), it holds that there also exists a $\vec{c}_{-i}^*$ for which $x_i(c_i,\vec{c}_{-i}^*) = 1$ and $p_i(c_i,\vec{c}_{-i}^*) = \sup_{\vec{c'}} p_i(\vec{c'})$. Therefore,
    \begin{align*}
    \sup_{\vec{c'}} u_i^{c_i}(\vec{c'}) = \sup_{\vec{c'}} (p_i(\vec{c'}) - c_i) = p_i(c_i,\vec{c}_{-i}^*) - c_i \leq \sup_{\vec{c}_{-i}} (p_i(c_i,\vec{c}_{-i}^*) - c_i) = \sup_{\vec{c}_{-i}} u_i^{c_i}(c_i,\vec{c}_{-i}^*) , 
    \end{align*}
    which establishes the BNOM property for $i$ with type $c_i$.
\end{proof}

With the above lemma, we are ready to prove our characterization result for BNOM.
\begin{proof}[Proof of Proposition \ref{prop:bnomchar}]
    ($\Rightarrow$) Suppose that $\mech$ satisfies normalized payments (NP),  individual rationality (IR), and BNOM. Let $i \in N$. We will show that the formula inside the first quantifier of (\ref{eq:threshold-gt-payments}) holds for $i$. By Lemma \ref{lem:bnomchar}, (\ref{eq:restricted-gt-payments}) is true. Therefore, we will show that 
    \begin{itemize}
        \item (P1) If for any $c_i \in [0,B]$ the first line of (\ref{eq:restricted-gt-payments}) holds, then the first line of (\ref{eq:restricted-gt-payments}) also holds for all types greater than $c_i$.
    \end{itemize} 
    The claim will then follow by the following argument: (P1) implies the existence of a threshold $b_i \in [0.B]$ such that $x_i(c_i,\cdot) = 0$ if and only if $c_i > b_i$. Line 1 of (\ref{eq:restricted-gt-payments}) then implies $\sup_{\vec{c'}} p_i(\vec{c'}) \leq b_i$. Suppose now, for contradiction, that the latter inequality is strict, i.e., let $b_i' = \sup_{\vec{c'}} p_i < w_i$. Let $c_i \in (b_i',b_i)$ and observe that $\sup_{\vec{c}_{-i}} u_i^{c_i}(c_i, \vec{c}_{-i}) = \sup_{\vec{c}_{-i}} (p_i(c_i,\vec{c}_{-i}) - c_i) = b_i' - c_i < b_i - c_i < 0$, which violates IR, so yields a contradiction. Thus, it must be that $\sup_{\vec{c'}} p_i(\vec{c'}) = b_i$. Line 2 of (\ref{eq:restricted-gt-payments}) then gives us that for all $c_i < b_i$, there exists a $\vec{c}_{-i}^*$ such that $p_i(c_i,\vec{c}_{-i}^*) = \sup_{\vec{c}'} p_i(\vec{c'})$, or equivalently, $\sup_{\vec{c}_{-i}} p_i(c_i,\vec{c}_{-i}) = \sup_{\vec{c}'} p_i(\vec{c'})$. So far, this establishes the first two clauses of (\ref{eq:threshold-gt-payments}). For the third line, note that if it holds that $x_i(b_i, \cdot) = 0$, then we are done, and otherwise $\sup_{\vec{c}_{-i}} p_i(b_i,\vec{c}_{-i}) = b_i$ by (\ref{eq:restricted-gt-payments}), which yields Clause 3 of (\ref{eq:threshold-gt-payments}).

    It remains to show that (P1) holds. Suppose not, then there exist two types $c_i,c_i'$ with $c_i' < c_i$ and $\vec{c}_{-i}$ such that both $x_i(c_i,\vec{c}_{-i}) = 1$, and $x_i(c_i',\cdot) = 0$. By IR, $p_i(c_i, \vec{c}_{-i}) \geq c_i > c_i'$. This yields a violation to BNOM: $\sup_{\vec{c'}_{-i}} u_i^{c_i'}(c_i',\vec{c'}_{-i}) = 0 < p_i(c_i,\vec{c}_{-i}) - c_i' = u_i^{c_i'}(c_i,\vec{c}_{-i}) \leq \sup_{\vec{c}} u_i^{c_i'}(\vec{c})$.

    ($\Leftarrow$) Suppose that $\mech$ satisfies NP, IR, and (\ref{eq:threshold-gt-payments}). Let $i \in N$. We will show that $\mech$ is BNOM for $i$. Let $b_i \in [0,B]$ be the unique threshold for which the formula of (\ref{eq:threshold-gt-payments}) inside the first two quantifiers holds. Let $c_i \in [0,B]$ be any type for bidder $i$. 
    
    If $c_i > b_i$, then $x_i(c_i,\cdot) = 0$. For all $\vec{c'}$ for which $x_i(\vec{c'})=1$, we can bound the payment as $p_i(\vec{c}') \leq b_i < c_i$. Thus, we can derive $\sup_{\vec{c}_{-i}} u_i^{c_i}(c_i,\vec{c}_{-i}) = 0 > p_i(\vec{c'}) - c_i = u_i^{c_i}(\vec{c'})$, and hence, $\sup_{\vec{c}_{-i}} u_i^{c_i}(c_i,\vec{c}_{-i}) \geq \sup_{\vec{c'}} u_i^{c_i}(\vec{c'})$, which means that BNOM holds for $i$ when $i$'s type is greater than $b_i$.
    
    If $c_i < b_i$, then (\ref{eq:threshold-gt-payments}) tells us that $\sup_{\vec{c}_{-i}} u_i^{c_i}(c_i,\vec{c}_{-i}) = b_i - c_i$. Furthermore, also by (\ref{eq:threshold-gt-payments}), for any profile $\vec{c'}$ where $x_i(\vec{c'}) = 1$, we have $p_i(\vec{c'}) \leq b_i$, so $u_i^{c_i}(\vec{c'}) \leq b_i - c_i$. For any profile $\vec{c'}$ where $x_i(\vec{c'}) = 0$, NP gives us $u_i^{c_i}(\vec{c'}) = 0$. Therefore, $\sup_{\vec{c}_{-i}} u_i^{c_i}(c_i,\vec{c}_{-i}) = b_i - c_i \geq \sup_{\vec{c'}} u_i^{c_i}(\vec{c'})$. That is, BNOM holds for $i$ when $i$'s type is less than $b_i$.

    Lastly, for the type $b_i$, if $x_i(b_i, \cdot) = 0$ then the BNOM condition for $i$ and $b_i$ follows from a similar analysis as for the case where $c_i > b_i$, and otherwise the BNOM condition for $i$ and $b_i$ follows by reasoning analogously to the case where $c_i < w_i$.
    \end{proof}

\smallskip
The following theorem shows that, among NP, IR, BF, and BNOM mechanisms, \textsc{MaxOr}\wwm\ is in fact optimal with respect to the approximation factor for additive, monotone submodular and monotone subadditive valuations: No NP, IR, BF, BNOM mechanism can achieve an approximation factor better than $2$.

\begin{theorem}\label{thm:gt-simple}
    Let $\mech$ be any deterministic mechanism that satisfies NP, IR, BF and BNOM. 
    Then $\mech$ is not $(2-\varepsilon)$-approximate for any $\varepsilon > 0$, for additive valuation functions. 
\end{theorem}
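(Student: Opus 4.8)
The plan is to construct a family of additive instances on which any mechanism satisfying NP, IR, BF, and BNOM is forced, by the threshold-golden-ticket characterization (Proposition~\ref{prop:bnomchar}), to either reject a valuable agent or overpay, leading to an approximation loss approaching $2$. Since BNOM (together with NP and IR) is equivalent to the use of threshold golden tickets, each agent $i$ has an associated threshold $b_i \in [0,B]$ governing when $i$ can possibly be selected. The idea is to exploit the tension between the budget constraint $\sum_i p_i \le B$ and the requirement that a selected agent be paid at least their declared cost (IR): if two agents both need to be paid near $B$ to be hired, they cannot both be selected within budget, so the mechanism must drop one of them and thereby lose value.

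\medskip

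\noindent\textbf{Construction.} First I would consider the simplest nontrivial case, $n = 2$ agents, with values $v_1 = v_2 = 1$ (additive). Look at the two thresholds $b_1, b_2 \in [0,B]$ guaranteed by Definition~\ref{def:threshold-gt-payments}. The key observation is that if agent $i$ is ever selected under some profile, then by IR and the threshold structure the payment to $i$ can be as large as $b_i$, and in particular for costs just below $b_i$ there is a witnessing profile where $i$ is both selected and paid $b_i$. I would design a cost profile $\vec{c}$ where both agents declare a tiny cost $\epsilon$, so that the optimal solution hires both agents (value $2$, total cost $2\epsilon \le B$), while forcing the mechanism into a regime where budget feasibility prevents hiring both. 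The main lever is that for an agent to be selected at all, its threshold $b_i$ must be positive, and the ``expensive'' witnessing profiles tied to the thresholds eat up the budget.

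\medskip

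\noindent\textbf{Key steps in order.} (1) Invoke Proposition~\ref{prop:bnomchar} to obtain the thresholds $b_1,\dots,b_n$ and the associated payment/selection guarantees. (2) Argue that at least one agent, say the unique most valuable one in an instance with a single high-value agent and several others, must be selectable, else the mechanism gets value $0$ while the optimum is positive, violating any finite approximation. (3) Build an instance where the optimal solution selects several low-cost agents of total value close to $2V(\{i\})$, but where the mechanism's threshold constraints plus BF force it to output only a single agent (or an insufficient subset). The cleanest route is likely an instance with one ``special'' agent and a cost configuration matching a golden-ticket witness profile, so that by the characterization the mechanism is compelled to pay $B$ to that agent and hence, by BF, can pay nothing to (and thus by IR not select) any positive-cost agent. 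This yields value $1$ against optimum approaching $2$. (4) Let $\epsilon \to 0$ to push the ratio to $2$.

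\medskip

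\noindent\textbf{Main obstacle.} The hard part will be ruling out clever mechanisms that select the valuable agent for \emph{free} (cost $0$) and then also pack in additional agents, thereby escaping the budget bottleneck. The threshold characterization only constrains the \emph{maximum} payment and the selection boundary, so I must choose the instance — in particular the declared costs and the values — so that \emph{every} agent that the mechanism could profitably add carries a strictly positive cost that, combined with the forced large payment to the selected high-value agent, exceeds $B$. Concretely, I expect the argument to hinge on picking a profile that is simultaneously a golden-ticket witness for one agent (forcing payment $B$ to it, exhausting the budget) while the optimum, which needs to pay only the small \emph{true} costs, can afford to bundle two agents of nearly equal value. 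Verifying that the mechanism cannot sidestep this — e.g. by having the selected agent declare a cost low enough to leave budget room yet still be the one the characterization forces a high payment to — is the delicate case analysis, and will probably require carefully using that the payment guarantee $\sup_{\vec{c}_{-i}} p_i(c_i,\vec{c}_{-i}) = b_i$ holds at the specific declared cost in the constructed profile.
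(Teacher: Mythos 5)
Your setup is on the paper's track---the paper's proof also takes $n=2$, $v_1=v_2=1$, invokes Proposition~\ref{prop:bnomchar}, and uses a golden-ticket witness to exhaust the budget---but your proposal has two genuine gaps. The first is that you repeatedly assert the witness profile ``compels'' a payment of $B$, while Definition~\ref{def:threshold-gt-payments} only compels a payment of $b_i$, and nothing in your argument rules out $b_i$ being small (in which case the witness eats essentially none of the budget; your observation that $b_i$ must be \emph{positive} is far too weak). The missing step, which the paper does first, is: if $b_i<B$ for some $i$, then on a profile with $c_i\in(b_i,B]$ and $c_j=0$ the pair is affordable (optimum $2$) yet agent $i$ is \emph{never} selected, so the mechanism gets value at most $1$ and is not $(2-\varepsilon)$-approximate. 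Only after concluding $b_1=b_2=B$ does ``paid $B$ at the witness'' become available.

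The second gap is fatal to your endgame. You correctly flag the delicate case---the mechanism selecting agents ``for free'' alongside the golden-ticket winner---but your proposed resolution, choosing the instance so that every other agent carries strictly positive cost, is not available to you: the witness profile $\vec{c}_{-i}^{\,*}$ is chosen by the \emph{mechanism} (the characterization only asserts its existence), not by the adversary. If the mechanism places the witness for agent $1$ declaring $0$ at $c_2^*=0$, then at the profile $(0,0)$ it selects agent $2$ as well, pays them $0$ (IR-consistent), and achieves value $2$ against optimum $2$---your ``value $1$ versus optimum approaching $2$'' loss never materializes, so no approximation contradiction can be extracted from any single profile. The paper closes precisely this case not with an approximation argument but with a budget-feasibility collision: at the witness $(0,c_2^*)$, BF forces $p_2=0$, and since agent $2$ must be selected (the profile is affordable), IR forces $c_2^*=0$, hence $p_1(0,0)=B$; by the symmetric argument for agent $2$'s witness, $p_2(0,0)=B$ as well, and then $p_1(0,0)+p_2(0,0)=2B>B$ contradicts BF. Without this double-witness collision at the all-zero profile, the case you yourself identified as the crux remains open, so the proposal as written does not yield the theorem.
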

\begin{proof}
    Suppose for contradiction that $\mech$ is $(2-\varepsilon)$-approximate for some $\varepsilon > 0$. Consider an instance with two agents with values $v_1 = v_2 = 1$. Clearly, $\mech$ should select both agents if their declared costs $(c_1,c_2)$ satisfy $c_1+c_2 \leq B$. By Proposition \ref{prop:bnomchar}, $\mech$ uses threshold golden tickets. Let $b_1$ and $b_2$ be the thresholds associated to the two agents. If for some $i \in \{1,2\}$ it holds that $b_i < B$, then Agent $i$ doesn't get selected on any profile $(c_1,c_2)$ with $c_i > b_i$ and $c_1+c_2 \leq B$, and this contradicts the fact that $\mech$ achieves a $(2-\varepsilon)$-approximation. Thus, the thresholds $b_1$ and $b_2$ are both $B$.

    Hence, if Agent $1$ declares $0$ there is a declaration $c_{2}$ such that $p_i(0,c_{2}) = B$. By BF, $p_{2}(0,c_{2}) = 0$ and by IR, $c_{2} = 0$. Thus, $p_1(0,0) = B$. By the same reasoning, there is a declaration $c_1$ such that $p_{2}(c_1,c_{2}) = p_{2}(c_1,0) = B$. By BF and IR we have again that $p_1(c_1,c_2) = p_1(0,0) =0$, which contradicts $p_1(0,0) = B$ and refutes our assumption that $\mech$ is $(2-\varepsilon)$-approximate.
\end{proof}

\subsection{WNOM Characterization and Approximation Bound of $\varphi$}\label{sec:wnomchar}
For WNOM, we provide a characterization in terms of thresholds that is similar to our BNOM characterization given in Section \ref{sec:bnomchar}. In words, our characterization property for WNOM states for each agent $i \in N$ there is a particular threshold $w_i \in [0,B]$ such that:
(1)    when declaring a type strictly above $w_i$, there exists a bid profile of the remaining agents under which $i$ is not selected by the mechanism,
(2)     declaring a type strictly below $w_i$ guarantees $i$ to get selected, and the minimum payment received by $i$ is $w_i$,
(3)     when declaring type $w_i$, either of the above two cases apply.
\begin{definition}\label{def:threshold-ws}
    A mechanism $\mech$ uses \emph{threshold wooden spoons} iff 
    \begin{align}
    \forall i \in N : \exists w_i \in [0,B] : & \quad 
    \left(\rule{0ex}{3ex} \forall c_i \in (w_i,B] : \inf_{\vec{c}_{-i}} x_i(c_i,\vec{c}_{-i}) = 0\right) \wedge \notag  
    \left(\forall c_i \in [0,w_i) : \inf_{\vec{c}_{-i}} p_i(c_i,\vec{c}_{-i}) = w_i \right) \notag \\
    & \wedge \, \left(\inf_{\vec{c}_{-i}} p_i(w_i,\vec{c}_{-i}) = w_i \vee \inf_{\vec{c}_{-i}} x_i(w_i,\vec{c}_{-i}) = 0\right) \label{eq:threshold-ws} .
\end{align}
\end{definition}

\begin{restatable}{proposition}{propwnomchar}
\label{prop:wnomchar}
A mechanism $\mech$ that satisfies NP and IR is WNOM if and only if $\mech$ uses threshold wooden spoons.
\end{restatable}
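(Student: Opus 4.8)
\textbf{Proof Proposal for Proposition~\ref{prop:wnomchar}.}

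The plan is to mirror the structure of the BNOM characterization (Proposition~\ref{prop:bnomchar}) as closely as possible, exploiting the formal duality between the two statements: WNOM replaces suprema by infima, and ``never selected'' by ``always selected,'' with payments bounded below rather than above. Accordingly, I would first establish an intermediate lemma analogous to Lemma~\ref{lem:bnomchar}, characterizing WNOM for NP mechanisms alone (without IR) via a notion of \emph{restricted wooden spoon payments}: for every $i$ and every $c_i$, either $i$ is \emph{always} selected when declaring $c_i$ and the \emph{minimum} payment $\inf_{\vec{c}_{-i}} p_i(c_i,\vec{c}_{-i})$ equals the global infimum $\inf_{\vec{c}'} p_i(\vec{c}')$ over selecting profiles, or there is a profile $\vec{c}_{-i}^*$ with $x_i(c_i,\vec{c}_{-i}^*)=0$. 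The key computation driving this is that under NP, $\inf_{\vec{c}_{-i}} u_i^{c_i}(c_i,\vec{c}_{-i})$ equals $0$ whenever $i$ can be rejected at $c_i$, and equals $\inf_{\vec{c}_{-i}} p_i(c_i,\vec{c}_{-i}) - c_i$ whenever $i$ is always selected at $c_i$; the WNOM inequality~\eqref{eq:WNOM} then translates directly into the infimum-payment condition. I would prove both directions by the same case split (can $i$ be rejected at $c_i$, or is $i$ always hired) used in Lemma~\ref{lem:bnomchar}.

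For the main proposition I would then upgrade from ``restricted wooden spoon payments'' to ``threshold wooden spoons,'' again paralleling the proof of Proposition~\ref{prop:bnomchar}. The core step is a monotonicity claim dual to (P1): \textbf{(P1$'$)} if $i$ is guaranteed to be selected when declaring some $c_i$ (the second disjunct fails), then $i$ is also guaranteed to be selected for every $c_i' < c_i$. This is where IR enters, exactly as in the BNOM proof: if $i$ could be rejected at a lower type $c_i'$ while being forced in at $c_i$, then by IR the payment at $(c_i,\vec{c}_{-i})$ is at least $c_i > c_i'$, and comparing $\inf$-utilities at type $c_i'$ gives a strict violation of~\eqref{eq:WNOM}. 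From (P1$'$) I extract the threshold $w_i = \sup\{c_i : i \text{ is always selected at } c_i\}$, so that $i$ is always hired below $w_i$ and can be rejected above $w_i$, yielding clauses (1) and (3). For clause (2) I must show the minimum payment below $w_i$ is exactly $w_i$: the restricted-payment lemma pins it to the global infimum $\inf_{\vec{c}'} p_i(\vec{c}')$, and IR forces this infimum to equal $w_i$ — it cannot exceed $w_i$ (take $c_i$ just below $w_i$ and use IR on a selecting profile), and an infimum strictly below $w_i$ would, at a type $c_i$ just under $w_i$, produce a selecting profile with payment below $c_i$, contradicting IR. The reverse direction is a direct verification: given threshold wooden spoons, compute $\inf_{\vec{c}_{-i}} u_i^{c_i}$ in the two regimes $c_i < w_i$ (value $w_i - c_i$, using NP to handle any non-selecting profiles as utility $0$) and $c_i > w_i$ (value $0$ since $i$ can be rejected), and check~\eqref{eq:WNOM} holds in each, with the boundary type $w_i$ following by the appropriate adjacent case.

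The main obstacle I anticipate is the careful handling of infima and the boundary behavior at $w_i$, which is genuinely more delicate than the supremum case for BNOM. In the BNOM direction one bounds payments \emph{above} and non-selecting profiles contribute the \emph{larger} utility $0$; for WNOM one bounds payments \emph{below}, and a non-selecting profile contributes utility $0$, which may be \emph{larger} than $w_i - c_i$ when $c_i < w_i$ — so I must argue that below the threshold no such rejecting profile exists (precisely what clause (2)'s ``always selected'' guarantees) in order for the infimum to be attained by payments rather than by a spurious rejection. I expect the edge cases where the relevant infimum is not attained (open suprema/infima over $\vec{c}_{-i}$) to require the same ``type just below/above $w_i$'' limiting arguments already used in Proposition~\ref{prop:bnomchar}, so no new technique should be needed beyond translating each strict-inequality contradiction into its dual form.
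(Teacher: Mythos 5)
Your high-level plan (an NP-only intermediate lemma mirroring Lemma~\ref{lem:bnomchar}, then an IR upgrade to thresholds) is not the paper's route --- the paper proves Proposition~\ref{prop:wnomchar} directly, with IR available throughout --- and the difference is not cosmetic: the intermediate lemma you propose is false as an ``if and only if''. Its ``if'' direction fails because your key computation, that under NP alone $\inf_{\vec{c}_{-i}} u_i^{c_i}(c_i,\vec{c}_{-i}) = 0$ whenever $i$ can be rejected at $c_i$, actually requires IR: without IR, a selecting profile may pay less than the true cost and drag the infimum below $0$. Concretely, take two agents, let agent $2$ never be selected, and let agent $1$ be selected iff $c_1 \le B/2$ and $c_2 > B/2$, always with payment $0$. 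This satisfies NP and your ``restricted wooden spoon payments'' condition (every declaration of every agent admits a rejecting profile), yet it is not WNOM: at true cost $t_1 = B/2$, truth-telling has worst-case utility $-B/2$, while declaring $B$ guarantees utility $0$. This is exactly where the BNOM/WNOM duality breaks: NP caps the utility of rejection at $0$, which is all the supremum side (Lemma~\ref{lem:bnomchar}) needs, but the infimum side needs a floor on the utility of selection, and that floor is IR. Only the ``only if'' direction of your lemma is true, so you must either weaken it to a one-way implication or assume IR inside it --- at which point you are essentially reproducing the paper's direct proof.

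Second, your derivation of clause (2) misattributes the crucial step to IR. You write that the global infimum ``cannot exceed $w_i$ (take $c_i$ just below $w_i$ and use IR on a selecting profile)'', but IR only lower-bounds payments and can never cap them: a mechanism that always selects agent $i$ and always pays strictly more than $w_i$ satisfies NP and IR. The upper bound genuinely needs a WNOM comparison with types \emph{above} the threshold, which is what the paper does: if $\check{p}_i := \inf_{\vec{c}_{-i}} p_i(c_i,\vec{c}_{-i}) > w_i$ for some $c_i < w_i$, pick a true type $c_i'$ strictly between $w_i$ and $\check{p}_i$; by clause (1) together with IR and NP, truth-telling at $c_i'$ has worst-case utility $0$, while deviating down to $c_i$ guarantees at least $\check{p}_i - c_i' > 0$, violating \eqref{eq:WNOM}. (Your other half --- that the infimum is at least $w_i$, via the lemma plus IR --- is sound, as are the monotonicity step (P1$'$), the threshold extraction, and the direct verification of the converse, all of which match the paper's argument.)
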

\begin{proof}
    ($\Rightarrow$) Suppose that $\mech$ satisfies NP, IR and WNOM, and let $i \in N$. We will show that the formula inside the first quantifier of (\ref{eq:threshold-ws}) holds for $i$. Suppose that for some $c_i \in [0,B]$, Agent $i$ always gets selected by the mechanism when declaring $c_i$, i.e., $x_i(c_i,\cdot) = 1$. Let $c_i' \in [0, c_i)$ be any declaration less than $c_i$. If there exists a $\vec{c}_{-i}'$ such that $x_i(c_i',\vec{c}_{-i}') = 0$, then 
\begin{equation*}
\inf_{c_{-i}''} u_i^{c_i'}(c_i',\vec{c}_{-i}'') = 0 < \inf_{\vec{c}_{-i}''}p_i(c_i,\vec{c}_{-i}'') - c_i' = \inf_{\vec{c}_{-i}''}(p_i(c_i,\vec{c}_{-i}'') - x_i(c_i,\vec{c}_{-i}'')c_i') = \inf_{\vec{c}_{-i}''} u_i^{c_i'}(c_i,\vec{c}_{-i}''),
\end{equation*}
where the first equality holds by NP, and the inequality holds because $p_i(c_i,\vec{c}_{-i}'') \geq c_i \geq c_i'$, by IR. This contradicts WNOM, and therefore it must hold that $x_i(c_i', \cdot) = 1$ for all $c_i' < c_i$. This implies the existence of a threshold $w_i$ such that $x_i(c_i',\cdot) = 1$ if $c_i' < w_i$, and such that $\inf_{\vec{c}_{-i}''} x_i(c_i',\vec{c}_{-i}'') = 0$ if $c_i' > w_i$. Note that the latter establishes the first clause of (\ref{eq:threshold-ws}).

Let $c_i \in [0,w_i)$ now be any declaration below the threshold, so that $i$ is guaranteed to be selected by $\mech$ when declaring $c_i$. Let $\check{p}_i = \inf_{\vec{c}_{-i}} p_i(c_i,\vec{c}_{-i})$, and suppose for contradiction that $\check{p}_i < w_i$. By declaring any $c_i'$ strictly in between $\check{p}_i$ and $w_i$, Agent $i$ is guaranteed to be selected and receives a payment of at least $c_i'$ (which must hold by IR). Hence,
\begin{equation*}
\inf_{\vec{c}_{-i}} u_{i}^{c_i}(c_i,\vec{c}_{-i}) = \check{p_i} - c_i < c_i' - c_i \leq \inf_{\vec{c}_{-i}} p_i(c_i',\vec{c}_{-i}) - c_i = \inf_{\vec{c}_{-i}} u_i^{c_i}(c_i,\vec{c}_{-i}), 
\end{equation*}
which contradicts WNOM. Thus, we conclude that $\check{p}_i \geq w_i$. Next, suppose for the sake of contradiction that $\check{p}_i > w_i$. Let $c_i'$ again be any declaration strictly in between $w_i$ and $\check{p}_i$. We now see that 
\begin{equation*}
\inf_{\vec{c}_{-i}} u_{i}^{c_i'}(c_i',\vec{c}_{-i}) = 0 < \check{p}_i - c_i' = \inf_{\vec{c}_{-i}} p_i(c_i,\vec{c}_{-i}) - c_i' = \inf_{\vec{c}_{-i}} u_i^{c_i'}(c_i, \vec{c}_{-i}), 
\end{equation*}
which is again a contradiction to WNOM and yields that $\inf_{\vec{c}_{-i}} p_i(c_i,\vec{c}_{-i}) = \check{p}_i = w_i$. This establishes the second clause of (\ref{eq:threshold-ws}).

For the third clause of (\ref{eq:threshold-ws}): If $x_i(w_i,\cdot) = 1$ then (\ref{eq:threshold-ws}) holds trivially. Otherwise, it holds that $\displaystyle\inf_{\vec{c}_{-i}} p_i(w_i,\vec{c}_{-i}) \!\geq\! w_i$, by IR. If the latter inequality would be strict, we could again take any $c_i'$ strictly in between $w_i$ and $p_i(w_i,\vec{c}_{-i})$ and reach a contradiction to WNOM as follows.
\begin{equation*}
\inf_{\vec{c}_{-i}} u_{i}^{c_i'}(c_i',\vec{c}_{-i}) = 0 < \inf_{\vec{c}_{-i}} p_i(w_i,\vec{c}_{-i}) - c_i' = \inf_{\vec{c}_{-i}} u_i^{c_i'}(w_i, \vec{c}_{-i}) ,
\end{equation*}
which establishes that the third clause of (\ref{eq:threshold-ws}) holds.

($\Leftarrow$) Suppose that $\mech$ satisfies NP and IR, and uses threshold wooden spoons. Let $i \in N$ and let $c_i \in [0,B]$. We will show that the WNOM condition holds for $i$ when $c_i$ is $i$'s true cost. Let $w_i$ be the threshold for which the formula inside the first quantifier of (\ref{eq:threshold-ws}) holds. We distinguish two cases: the first case we will analyse is when $c_i < w_i$ or $c_i= w_i \wedge \inf_{\vec{c}_{-i}} p_i(w_i, \vec{c}_{-i}) = w_i$. If the first case does not hold, then according to (\ref{eq:threshold-ws}) it must be that $c_i > w_i$ or $c_i= w_i \wedge \inf_{\vec{c}_{-i}} x_i(w_i, \vec{c}_{-i}) = 0$, which comprises the second case in our analysis.

\paragraph{Case 1.} $c_i < w_i$ or $c_i = w_i \wedge \inf_{\vec{c}_{-i}} p_i(w_i, \vec{c}_{-i}) = w_i$. Then, $x_i(c_i,\cdot) = 1$ and $\inf_{\vec{c}_{-i}} p_i(c_i,\vec{c}_{-i}) = w_i$, so $\inf_{\vec{c}_{-i}} u_i^{c_i}(c_i,\vec{c}_{-i}) = w_i - c_i$. For any declaration $c_i'$ different from $c_i$ we may distinguish between two subcases: 
\begin{itemize}\itemsep0pt
    \item If $c_i' < w_i$ or $c_i' = w_i \wedge \inf_{\vec{c}_{-i}} p_i(w_i,\vec{c}_{-i}) = w_i$, then $\inf_{\vec{c}_{-i}} u_i^{c_i}(c_i',\vec{c}_{-i}) = w_i - c_i = \inf_{\vec{c}_{-i}} u_i^{c_i}(c_i,\vec{c}_{-i})$;
    \item if $c_i' > w_i$ or $c_i' = w_i \wedge \inf_{\vec{c}_{-i}} x_i(w_i,\vec{c}_{-i}) = 0$, then $\inf_{\vec{c}_{-i}} u_i^{c_i}(c_i',\vec{c}_{-i}) = 0 \leq \inf_{\vec{c}_{-i}} u_i^{c_i}(c_i,\vec{c}_{-i})$.
\end{itemize}
Hence, in both subcases, the WNOM condition holds for Agent $i$ with true cost $c_i$.

\paragraph{Case 2.} $c_i > w_i$ or $c_i = w_i \wedge \inf_{\vec{c}_{-i}} x_i(w_i,\vec{c}_{-i}) = 0$. Then, $\inf_{\vec{c}_{-i}} x_i(c_i,\vec{c}_{-i}) = 0$.  For any declaration $c_i'$ different from $c_i$ we may again distinguish between two subcases: 
\begin{itemize}\itemsep0pt
    \item If $c_i' < w_i$ or $c_i' = w_i \wedge \inf_{\vec{c}_{-i}} p_i(w_i,\vec{c}_{-i}) = w_i$, then $\inf_{\vec{c}_{-i}} u_i^{c_i}(c_i',\vec{c}_{-i}) = w_i - c_i \leq 0 = \inf_{\vec{c}_{-i}} u_i^{c_i}(c_i,\vec{c}_{-i})$;
    \item if $c_i' > w_i$ or $c_i' = w_i \wedge \inf_{\vec{c}_{-i}} x_i(w_i,\vec{c}_{-i}) = 0$, then $\inf_{\vec{c}_{-i}} u_i^{c_i}(c_i',\vec{c}_{-i}) \leq 0 = \inf_{\vec{c}_{-i}} u_i^{c_i}(c_i,\vec{c}_{-i})$.
\end{itemize}
Hence, in both subcases, the WNOM condition holds for Agent $i$ with true cost $c_i$.
\end{proof}

We can use the above characterization to establish a simple lower bound of $\varphi$ on the approximation factor achievable by NP, IR, BF, and BNOM mechanisms. 
\begin{theorem}\label{thm:ws-simple}
    Let $\mech$ be any deterministic mechanism that satisfies NP, IR, BF and WNOM. 
    Then $\mech$ is not $(\varphi-\varepsilon)$-approximate for any $\varepsilon > 0$, for additive valuation functions. Here, $\varphi = (1 + \sqrt{5})/2$ is the golden ratio. 
\end{theorem}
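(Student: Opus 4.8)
\textbf{Proof proposal for Theorem~\ref{thm:ws-simple}.}
The plan is to exploit the WNOM characterization (Proposition~\ref{prop:wnomchar}) together with IR, BF, and the budget constraint to force a contradiction with a $(\varphi - \varepsilon)$-approximation. I would work with a two-agent additive instance carefully calibrated so that the golden ratio emerges from the algebra. The key structural fact supplied by Proposition~\ref{prop:wnomchar} is that each agent $i$ has a threshold $w_i \in [0,B]$ such that declaring below $w_i$ guarantees selection with a minimum payment of $w_i$, while declaring above $w_i$ admits at least one profile of the others on which $i$ is rejected. The essential tension is this: to get a good approximation, the mechanism must select agents even when their declared costs are moderately high, which forces the thresholds $w_i$ to be large; but large thresholds force large \emph{minimum} payments (at least $w_i$) whenever the agent is selected below threshold, and budget-feasibility caps the total payments at $B$, so the two thresholds cannot both be large simultaneously.

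First I would set up the instance. I expect to use two agents with values $v_1 = v_2 = 1$ (or possibly asymmetric values, say $v_1 = 1$ and a smaller $v_2$, chosen to tune the ratio to exactly $\varphi$). The optimal solution on a profile $(c_1, c_2)$ with $c_1 + c_2 \le B$ selects both agents, giving value $2$; selecting one agent gives value $1$. The next step is to argue, via the approximation assumption, lower bounds on the thresholds $w_1, w_2$: if $w_i$ were too small, there would be a feasible profile with $c_i > w_i$ on which the mechanism could reject $i$ (by the first clause of \eqref{eq:threshold-ws}), and on the adversarial profile realizing that rejection the mechanism would be forced below the approximation guarantee. This step pins down how large each $w_i$ must be as a function of $\varepsilon$.

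The heart of the argument is the budget clash. If agent $1$ declares some $c_1 < w_1$, then by the second clause of \eqref{eq:threshold-ws}, agent $1$ is guaranteed selected and receives at least $w_1$ on \emph{every} profile of the others—in particular when agent $2$ declares some $c_2 < w_2$, on which agent $2$ is likewise selected and paid at least $w_2$. Budget-feasibility then demands $w_1 + w_2 \le B$. I would combine this with the approximation-driven lower bounds on $w_1$ and $w_2$ from the previous step. The golden ratio should drop out of the requirement that the thresholds be simultaneously large enough to guarantee the approximation on the single-agent-rejection profiles yet small enough in sum to respect the budget; setting the resulting constraints to be tight and solving the quadratic (of the form $w^2 + w = B^2$ or $\varphi^2 = \varphi + 1$ after normalization) is what produces $\varphi = (1+\sqrt 5)/2$ as the borderline approximation factor.

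The main obstacle I anticipate is the bookkeeping around the threshold clauses when $c_i$ equals $w_i$ exactly, and more importantly calibrating the instance so that the two competing constraints (approximation forcing thresholds up, budget forcing their sum down) meet precisely at $\varphi$ rather than at some other constant. Getting the tight value requires choosing the declared costs and possibly the values at the right points—likely declarations approaching $w_i$ from below and costs on the rejection profiles approaching the thresholds from above—so that the approximation ratio on the worst profile is exactly $w_i$-dependent and the budget inequality $w_1 + w_2 \le B$ binds. I would handle the boundary type $c_i = w_i$ by appealing to the third clause of \eqref{eq:threshold-ws} and taking limits, so that the contradiction is obtained for every $\varepsilon > 0$ by pushing the declarations arbitrarily close to the thresholds.
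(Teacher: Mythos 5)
There is a genuine gap, and it is in the step you call the heart of the argument. You claim that the approximation guarantee forces the thresholds $w_1,w_2$ to be large, because otherwise "there would be a feasible profile with $c_i > w_i$ on which the mechanism could reject $i$." But the first clause of \eqref{eq:threshold-ws} only asserts the \emph{existence} of some profile $\vec{c}_{-i}$ on which $i$ is rejected when declaring above $w_i$ — the mechanism gets to choose where that rejection profile lives, and it will place it among the \emph{budget-infeasible} profiles ($c_i + \sum_{j\neq i} c_j > B$), where rejecting an agent costs nothing in approximation (indeed, BF and IR already force the mechanism to drop agents there). So the approximation hypothesis exerts no upward pressure on the thresholds at all, and your budget clash $w_1 + w_2 \le B$ never gets traction. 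Concretely, for your primary instance $v_1 = v_2 = 1$ the theorem's conclusion cannot even be reached: consider the mechanism that selects both agents whenever $c_1 + c_2 \le B$, selects exactly one agent on infeasible profiles (arranged so that every declaration $c_1 > 0$ has some infeasible profile where agent $1$ is rejected, and symmetrically for agent $2$), and pays every selected agent their bid. This mechanism is NP, IR, BF, uses threshold wooden spoons with $w_1 = w_2 = 0$ (so it is WNOM by Proposition~\ref{prop:wnomchar}), and is $1$-approximate on that instance. Equal values therefore yield no contradiction whatsoever; the budget tension you describe, with both agents paid at least their thresholds, simply dissolves when the thresholds are $0$.

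The paper's proof works for exactly the reason your counterexample-instance fails: it uses \emph{asymmetric} values $v_1 = \varphi$, $v_2 = 1$, calibrated so that $v_1/v_2 = (v_1+v_2)/v_1 = \varphi$. With this calibration, rejecting agent $1$ is too costly \emph{everywhere} — on feasible profiles (ratio $(\varphi+1)/\varphi = \varphi$) and, crucially, also on infeasible ones (ratio $\varphi/1 = \varphi$). Hence agent $1$ has no rejection profile at any declaration, and clause 1 of \eqref{eq:threshold-ws} forces $w_1 = B$ exactly; clause 2 then forces agent $1$ to be paid $B$ on every profile with $c_1 < B$, so BF starves agent $2$ of any payment, and IR then forbids selecting agent $2$ whenever $c_2 > 0$ — killing the approximation on feasible profiles. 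Note that the contradiction is between one agent's \emph{forced payment of the whole budget} and the other agent's IR, not between two large thresholds (in the paper's instance $w_2 = 0$ is perfectly consistent). The golden ratio enters as the fixed point of $x \mapsto (x+1)/x$, i.e., through the choice of the value ratio, not through a quadratic in the thresholds against the budget. Your parenthetical fallback ("$v_1 = 1$ and a smaller $v_2$") points at the right instance, but the argument you would run on it is still the threshold-clash argument, which does not go through even there.
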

\begin{proof}
    Suppose for contradiction that $\mech$ is $(\varphi - \varepsilon)$-approximate for some $\varepsilon > 0$. Consider an instance with two agents with values $v_1 = \varphi$ and $v_2 = 1$. Mechanism $\mech$ should select both agents for all cost profiles $(c_1,c_2)$ that satisfy $c_1 + c_2 \leq B$, otherwise the valuation of the selected agent would be at most $\varphi$, which yields an approximation factor of at least $(\varphi + 1)/\varphi = \varphi$, and hence $\mech$ would not be $(\varphi - \varepsilon)$-approximate. On all remaining cost profiles, for which $c_1 + c_2 > B$, mechanism $\mech$ must select Agent $1$, as otherwise the approximation factor would be $\varphi/1$ so again $\mech$ would not be $(\varphi - \varepsilon)$-approximate. Thus, Agent 1 always gets selected, regardless of the declared cost profile.
    
    By Proposition \ref{prop:wnomchar}, $\mech$ uses threshold wooden spoons, and by the fact that Agent $1$ is always selected by $\mech$, regardless of their declared cost, we know that the threshold $w_1$ associated to agent 1 is $B$. This implies that Agent $1$ is always paid an amount of $B$, regardless of their reported cost. So on any profile $(c_1,c_2)$, where $c_1 + c_2 \leq B$,  and $c_2 > 0$, mechanism $\mech$ selects only agent 1 and pays them $B$, while agent 2 cannot be selected due to BF. This yields an approximation factor of $(\varphi+1)/\varphi = \varphi$ on such profiles, and contradicts that $\mech$ is $(\varphi - \varepsilon)$-approximate.
\end{proof}

\minisec{Optimal $\varphi$-Approximation Mechanism for WNOM}
The approximation factor lower bound of $\varphi$ turns out to be tight, and this is even the case for monotone subadditive valuation functions. In this section, we present the \textsc{GoldenMechanism} (Mechanism \ref{mech:wnom}), which we show to be WNOM, IR, NP, and BF, achieving an approximation guarantee of $\varphi$ for a monotone subadditive valuation function, for an arbitrary number of agents.
To define the \textsc{GoldenMechanism}, we need the following optimality notions.
\begin{definition}\label{def:goldenopt}
    Given a set of agents $N$, valuation function $V$, budget $B$, and declared costs $\vec{c}$ let 
    \begin{equation*}
    X^*(\vec{c},B) \in \arg_S \max\left\{V(S)\ :\  S \subseteq N  \wedge \sum_{i \in S} c_i \leq B\right\} 
    \end{equation*}
    be a subset of agents that is budget feasible with respect to $B$ and optimizes $V$. Furthermore, let
    \begin{equation*}
    X^*_{\geq 2}(\vec{c}_{-1},B) \in \arg_S \max\left\{V(S)\ :\  S \subseteq N\setminus\{1\}  \wedge \sum_{i \in S} c_i \leq B\right\} 
    \end{equation*}
    be a subset of agents that is budget feasible with respect to $B$ and optimizes $V$ restricted to the subsets of $N\setminus\{1\}$ (i.e., excluding Agent $1$). In case of ties (i.e., in case there is more than one maximum element in the sets in the above $\arg \max$ expressions), the maximum agent set is chosen with respect to any fixed strict total order over $2^N$ that prefers higher-cardinality sets over lower-cardinality sets.
\end{definition}
The \textsc{GoldenMechanism} is derived from an allocation rule $X1$ that, assuming $V(\{1\}) \geq \cdots \geq V(\{n\})$, selects the possibly sub-optimal set of agents $X^*_{\geq 2}(\vec{c}_{-1},B)$ whenever that choice results in an approximation factor of at most $\varphi$, and selects $X^*(\vec{c},B)$ otherwise.
\begin{definition}\label{def:X1}
    Given a set of agents $N$, valuation function $V$, budget $B$, and declared costs $\vec{c}$ let 
    \begin{equation*}
        X1(\vec{c},B) = 
        \begin{cases} X^*_{\geq 2}(\vec{c}_{-1},B) & \text{ if } V(X^*(\vec{c},B))/V(X^*_{\geq 2}(\vec{c}_{-1},B)) < \varphi , \\
        X^*(\vec{c},B) & \text{ otherwise.}
        \end{cases}
    \end{equation*}
\end{definition}
We first prove that $X1$ is monotone in Agent 1's declared cost $c_1$, which is a property that we need in order to properly understand the behaviour of the \textsc{GoldenMechanism}. 
\begin{lemma}\label{lem:x1monotone}
    Let $c_1, c_1' \in [0,B]$, $c_1' < c_1$, and let $\vec{c}_{-1} \in [0,B]^{n-1}$.
    If $1 \in X1((c_1,\vec{c}_{-1}),B)$ then $1 \in X1((c_1', \vec{c}_{-1}),B)$.
\end{lemma}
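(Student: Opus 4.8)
The plan is to reduce the claim to a monotonicity property of the exact optimum $X^*(\cdot,B)$ in Agent~1's declared cost, after first checking that lowering $c_1$ cannot push us out of the branch of $X1$ that is allowed to contain Agent~1. Throughout I would write $\vec{c} = (c_1,\vec{c}_{-1})$ and $\vec{c}' = (c_1',\vec{c}_{-1})$ with $c_1' < c_1$, and record the two elementary feasibility facts: relaxing Agent~1's cost preserves feasibility (every set feasible under $\vec{c}$ is feasible under $\vec{c}'$), and any set avoiding Agent~1 has identical cost, hence identical feasibility, under $\vec{c}$ and $\vec{c}'$.

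First I would note that $X^*_{\geq 2}(\vec{c}_{-1},B)$ never contains Agent~1, so $1 \in X1(\vec{c},B)$ forces $X1$ into its second (``otherwise'') branch; thus $X1(\vec{c},B) = X^*(\vec{c},B)$ and $1 \in X^*(\vec{c},B)$. Next I would argue that $\vec{c}'$ also lands in the otherwise branch: by the feasibility fact above $V(X^*(\vec{c}',B)) \geq V(X^*(\vec{c},B))$, while the denominator $V(X^*_{\geq 2}(\vec{c}_{-1},B))$ depends only on $\vec{c}_{-1}$ and is unchanged, so the branch ratio can only increase and remains $\geq \varphi$; hence $X1(\vec{c}',B) = X^*(\vec{c}',B)$. (The degenerate zero-denominator case is handled uniformly, since the same denominator appears on both sides of the comparison.) This reduces the entire lemma to showing $1 \in X^*(\vec{c}',B)$.

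The core step, and the main obstacle, is this last monotonicity claim, where the tie-breaking rule must be handled with care. Let $S = X^*(\vec{c},B)$ (so $1 \in S$) and $T = X^*(\vec{c}',B)$, and suppose toward a contradiction that $1 \notin T$. Since $T$ omits Agent~1, it is feasible under $\vec{c}$ as well, so optimality of $S$ under $\vec{c}$ gives $V(S) \geq V(T)$; conversely $S$ is feasible under $\vec{c}'$ and $T$ is optimal under $\vec{c}'$, so $V(T) \geq V(S)$. Hence $V(S) = V(T)$, and both $S$ and $T$ are maximum-value feasible sets under $\vec{c}$ and also under $\vec{c}'$, with $S \neq T$ (as $1 \in S$ but $1 \notin T$). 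The tie-break selects the $\succ$-maximal such set: being the chosen optimum under $\vec{c}$ gives $S \succ T$, while being the chosen optimum under $\vec{c}'$ gives $T \succ S$, contradicting that $\succ$ is a strict total order. Therefore $1 \in T = X^*(\vec{c}',B)$, and combined with the branch analysis this yields $1 \in X1(\vec{c}',B)$.

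I expect the feasibility bookkeeping to be entirely routine; the only place demanding attention is making the tie-breaking contradiction airtight, namely verifying that $S$ and $T$ are genuinely comparable candidates in both tie-breaks and that $S \neq T$, both of which follow immediately from $1 \in S$ and $1 \notin T$.
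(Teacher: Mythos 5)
Your proof is correct, and it shares its skeleton with the paper's argument: both observe that $1 \in X1((c_1,\vec{c}_{-1}),B)$ forces the ``otherwise'' branch (since $X^*_{\geq 2}$ excludes Agent~$1$), and both exploit that lowering $c_1$ only enlarges the feasible family, so $V(X^*((c_1',\vec{c}_{-1}),B)) \geq V(X^*((c_1,\vec{c}_{-1}),B))$ while the denominator $V(X^*_{\geq 2}(\vec{c}_{-1},B))$ is unchanged. The core step, however, differs. The paper argues by contradiction at the level of branches: assuming $1 \notin X1((c_1',\vec{c}_{-1}),B)$, it equates this with $X1((c_1',\vec{c}_{-1}),B) = X^*_{\geq 2}(\vec{c}_{-1},B)$, reads off the first-branch inequality $\varphi\, V(X^*_{\geq 2}(\vec{c}_{-1},B)) \geq V(X^*((c_1',\vec{c}_{-1}),B))$, and pushes it back to $c_1$ to contradict the branch taken there. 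You instead prove two monotonicity facts directly: that the ``otherwise'' branch persists at $c_1'$ (same denominator, weakly larger numerator), and that membership of Agent~$1$ in the exact optimizer $X^*$ is itself monotone in $c_1$, which you establish through the tie-breaking order of Definition~\ref{def:goldenopt} (the double comparison $S \succ T$ and $T \succ S$ being impossible). This second fact is your genuinely new ingredient---the paper never invokes the tie-breaking rule in this lemma---and it buys coverage of a sub-case the paper silently skips: $1 \notin X1((c_1',\vec{c}_{-1}),B)$ does not by itself mean the first branch fired, since the ``otherwise'' branch could fire with $1 \notin X^*((c_1',\vec{c}_{-1}),B)$; in that sub-case the paper's inequality chain holds only non-strictly, so its final step (first branch at $c_1$, hence contradiction) does not follow. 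That sub-case can only arise when the competing optimal values are tied (indeed all zero), and your tie-breaking contradiction is precisely what rules it out. The price is a longer argument that leans on the specific tie-breaking convention; the paper's proof is shorter and needs only the branch condition wherever that degenerate tie does not occur.
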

\begin{proof}
    Since $1 \in X1((c_1,\vec{c}_{-1}),B)$, it holds that $X1((c_1,\vec{c}_{-1}),B) = X^*((c_1,\vec{c}_{-1}),B)$. If we suppose that $1 \not\in X1((c_1',\vec{c}_{-1}),B)$, then  $X1((c_1',\vec{c}_{-1}),B) = X^*_{\geq 2}(\vec{c}_{-1},B)$, then we obtain that $\varphi \cdot V(X^*_{\geq 2}(\vec{c}_{-1},B)) \geq V(X^*((c_1',\vec{c}_{-1}),B)) \geq V(X^*((c_1,\vec{c}_{-1}),B))$, which implies that $X1((c_1,\vec{c}_{-1}),B) = X^*_{\geq 2}(\vec{c}_{-1},B)$, a contradiction.
\end{proof}
 
There exists a certain maximal threshold $w_1 \in [0,B]$ (where possibly $w_1 = 0$) such that $X1$ guarantees that Agent $1$ is selected whenever Agent 1 reports a cost less than $w_1$. That is, $1 \in X1((c_1, \vec{c_{-1}}),B)$ for all $c_1 \in [0,w_1)$ and $\vec{c}_{-1} \in [0,B]^{n-1}$.
Another property we need in order to well-define the \textsc{GoldenMechanism}, is that if Agent 1 declares cost $w_1$ itself, then Agent $1$ also is guaranteed to get selected. This is captured by the following lemma.
\begin{lemma}\label{lem:w1maxexists}
 If the set $Y = \{d_1\ |\ \forall \vec{d}_{-1} \in [0,B]^{n-1} : 1 \in X1(d_1,\vec{d}_{-1},B)\}$ is non-empty, then it has a maximum element.
\end{lemma}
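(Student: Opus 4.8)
The plan is to first show that $Y$ is downward closed, and then to argue that its supremum $w_1 := \sup Y$ actually lies in $Y$. Downward closure is immediate from Lemma~\ref{lem:x1monotone}: if $d_1 \in Y$ and $d_1' < d_1$, then for every $\vec{d}_{-1}$ we have $1 \in X1((d_1,\vec{d}_{-1}),B)$, so by monotonicity $1 \in X1((d_1',\vec{d}_{-1}),B)$, whence $d_1' \in Y$. Since $Y \neq \emptyset$ and $Y \subseteq [0,B]$, the value $w_1 = \sup Y$ exists and $[0,w_1) \subseteq Y$. If $w_1 = 0$ then $Y = \{0\}$ and we are done, so assume $w_1 > 0$; it remains to prove $w_1 \in Y$, i.e., $1 \in X1((w_1,\vec{d}_{-1}),B)$ for every fixed $\vec{d}_{-1}$.

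Fix $\vec{d}_{-1}$. The central observation is that, by Definition~\ref{def:X1}, having $1 \in X1$ forces the ``otherwise'' branch (since $X^*_{\geq 2}(\vec{d}_{-1},B)$ never contains Agent $1$); hence for each $c_1 \in [0,w_1)$ the selected set is $X^*((c_1,\vec{d}_{-1}),B) \ni 1$, and the ratio $V(X^*((c_1,\vec{d}_{-1}),B))/V(X^*_{\geq 2}(\vec{d}_{-1},B)) \geq \varphi$. Because $2^N$ is finite, the map $c_1 \mapsto X^*((c_1,\vec{d}_{-1}),B)$ takes only finitely many values on $[0,w_1)$, so at least one set $S^\star$ is selected along a sequence $c_1^{(k)} \uparrow w_1$; note $S^\star \ni 1$. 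I would then record two facts about the optimal value $g(c_1) := V(X^*((c_1,\vec{d}_{-1}),B))$: it is non-increasing in $c_1$ (raising $c_1$ only tightens the budget constraint for sets containing Agent $1$ and leaves all other sets untouched), and the feasibility of any fixed $S \ni 1$, namely $c_1 \leq B - \sum_{i \in S\setminus\{1\}} d_i$, is a closed condition in $c_1$. Consequently $S^\star$, being feasible at every $c_1^{(k)}$, is feasible at $w_1$; combined with the monotonicity of $g$ this yields $g(w_1) = V(S^\star)$, so $S^\star$ is an optimal set at $w_1$.

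The main obstacle is the tie-breaking: I must show that the particular optimal set $X^*((w_1,\vec{d}_{-1}),B)$ returned at $w_1$ contains Agent $1$, rather than being some value-tied competitor that excludes Agent $1$. The key is that the collection of optimal sets at $w_1$ is contained in the collection of optimal sets at any $c_1^{(k)}$: if $T$ is feasible at $w_1$ with $V(T) = g(w_1) = V(S^\star) = g(c_1^{(k)})$, then $T$ is also feasible at $c_1^{(k)} < w_1$ (for $T \ni 1$ this uses the closed feasibility interval, and for $T \not\ni 1$ feasibility does not depend on $c_1$ at all), hence $T$ is optimal there too. Since $X^*$ breaks ties by a fixed total order and $S^\star$ is, by construction, the tie-break-maximal optimal set at $c_1^{(k)}$, and since $S^\star$ lies in the (smaller) optimal collection at $w_1$, it must also be the tie-break-maximal optimal set at $w_1$; thus $X^*((w_1,\vec{d}_{-1}),B) = S^\star \ni 1$.

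Finally, since $X^*_{\geq 2}(\vec{d}_{-1},B)$ does not depend on $c_1$ and $g(w_1) = V(S^\star)$, the ratio at $w_1$ equals $V(S^\star)/V(X^*_{\geq 2}(\vec{d}_{-1},B)) \geq \varphi$, so the ``otherwise'' branch is taken at $w_1$ and $X1((w_1,\vec{d}_{-1}),B) = X^*((w_1,\vec{d}_{-1}),B) \ni 1$. As $\vec{d}_{-1}$ was arbitrary, $w_1 \in Y$, so $Y$ attains its maximum at $w_1$. A minor point to handle along the way is the degenerate case $V(X^*_{\geq 2}(\vec{d}_{-1},B)) = 0$, where the ratio should be read as $+\infty$ under the convention that forces the ``otherwise'' branch.
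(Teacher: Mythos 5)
Your proof is correct and follows essentially the same route as the paper's: take $w_1 = \sup Y$, use finiteness of $2^N$ to pin down a stable optimal set $S^\star \ni 1$ just below $w_1$, push its feasibility to $w_1$ by a closedness/limit argument, resolve ties by observing that $S^\star$ was already the tie-break-maximal optimal set below $w_1$, and conclude that the ratio test still selects the $X^*$ branch at $w_1$. If anything, you are more explicit than the paper on two points it treats tersely: the containment argument justifying the tie-breaking step, and the degenerate case $V(X^*_{\geq 2}(\vec{d}_{-1},B)) = 0$.
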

\begin{proof}
Suppose that $Y$ is non-empty and let $w_1 = \sup Y$.
We prove this by first showing that, for all $\vec{c}_{-1}$, we have $1 \in X^*((w_1, \vec{c}_{-1}),B)$, and then proving that $X1((w_1,\vec{c}_{-1}),B)$ chooses $X^*_{\geq 2}(\vec{c}_{-1},B)$ among the two alternatives $X^*((w_1, \vec{c}_{-1}),B)$ and $X^*_{\geq 2}(\vec{c}_{-1},B)$.

First, we observe that if $w_1 = 0$, then the statement is trivial: Since $Y$ is non-empty, $Y = \{0\}$. Thus, from here onward we assume $w_1 > 0$. Let $\vec{c}_{-1} \in [0,B]^{n-1}$ be arbitrary. The value $V(X^*((c_1,\vec{c}_{-1}),B))$, as a function of $c_1$, is monotonically non-increasing. Combining this with the fact that the number of subsets of $N$ is finite, there is an $\epsilon > 0$ such that $X^*((c_1, \vec{c}_{-1}),B)$ (again as a function of $c_1$) is constant on $c_1 \in [w_1 - \epsilon, w_1)$. Let $S$ be the set of agents that $X^*((c_1, \vec{c}_{-1}),B)$ maps to on the domain $c_1 \in [w_1 - \epsilon, w_1)$. By the definition of $w_1$ and Lemma \ref{lem:x1monotone}, Agent $1$ is in $S$. Furthermore for all $c_1 \in [w_1 - \epsilon, w_1)$,
$
c_1 + \sum_{i \in S\setminus\{1\}} c_i \leq B ,    
$
and therefore 
\begin{equation*}
    w_1 + \sum_{i \in S \setminus \{1\}} c_i = \lim_{c_1 \uparrow w_1} \left[ c_1 + \sum_{i \in S \setminus \{1\}} c_i\right] \leq B.
\end{equation*}
This establishes that $S$ is a budget-feasible set under the cost profile $(w_1,\vec{c}_{-1})$
Suppose now, for contradiction, that $X^*((w_1,\vec{c}_{-1}),B) = T \not= S$. In that case, by the non-increasingness of  $V(X^*((c_1,\vec{c}_{-1}),B))$ in the argument $c_1$, it holds that $V(X^*((w_1,\vec{c}_{-1}),B)) \leq V(S)$. If the latter inequality is strict, there is a contradiction with $S$ being budget-feasible under $(w_1,\vec{c}_{-1})$: $X^*((w_1, \vec{c}_{-1}),B)$ should select $S$ instead of $T$.  If on the other hand, the latter inequality holds with equality, there is a contradiction with the tie-breaking rule (see Definition \ref{def:goldenopt}, which prescribes that $S$ gets priority over $T$, and hence $X^*((w_1, \vec{c}_{-1}),B)$ should select $S$ rather than $T$. Altogether, we conclude that $X^*((w_1,\vec{c}_{-1}),B) = S$, and hence that $1 \in X^*((w_1,\vec{c}_{-1}),B)$, concluding the first part of the proof.

For the second part of the proof, i.e., showing that $X1((w_1,\vec{c}_{-1}),B) \not= X^*_{\geq 2}(\vec{c}_{-1},B)$, suppose for contradiction that  $X1((w_1,\vec{c}_{-1}),B) = X^*_{\geq 2}(\vec{c}_{-1},B)$. 
By definition of $X1$ it holds that $V(S) = V(X^*((w_1,\vec{c}_{-1}),B) < \varphi 
X^*_{\geq 2}((w_1,\vec{c}_{-1}),B)$. Now, let $c_1 \in [w_1 - \epsilon, w_1)$. Since  $X^*((c_1,\vec{c}_{-1}),B) = S$ we obtain that $V(X^*((c_1,\vec{c}_{-1}),B) = V(S) < \varphi X^*_{\geq 2}(\vec{c}_{-1},B)$, so that $X1(c_1, \vec{c}_{-1},B) = X^*_{\geq 2}(\vec{c}_{-1},B)$, which contradicts that by definition of $w_1$ it must hold that $1 \in X1((c_1,\vec{c}_{-1}),B)$.
\end{proof}
With the above definitions and ideas in mind, the \textsc{GoldenMechanism} is now straightforward to define: On reported cost profile $\vec{c}$ we let the \emph{proxy cost profile} $\vec{c}'$ be such that $c_1' = \max\{w_1, c_1\}$ and $c_i' = c_i$ for all $i > 2$. The mechanism then selects the set $X1(\vec{c}',B)$ and pays each selected agent $i$ an amount of $\vec{c}_i'$. In other words, we use the selection rule $X1$ and pay the winning agents their declared cost, where for Agent $1$ the mechanism ``pretends'' that the declared cost is $w_1$ in case they declared less than $w_1$. The mechanism makes an exception when the reported costs of all Agents except Agent 2 are equal to $B$, in which case the optimum agent set $X^*(\vec{c},B)$ is always selected (which is $\{1\}$ if $c_2 > 0$ and $\{1,2\}$ if $c_2 = 0$).

See Mechanism \ref{mech:wnom} for a precise description of the \textsc{GoldenMechanism}.
\begin{mechanism}[t]
\caption{$\textsc{GoldenMechanism}(I)$\label{mech:wnom}}

\nonl \hspace*{-1em} $\rhd$ {\bf{Input:}}  
Instance $I=(N, \vec{c}, V, B)$, with $N = \{1,\ldots,n\}$, and $V$ monotone subadditive.\;

\nonl \hspace*{-1em} $\rhd$ Let $X1$ be as in Definition \ref{def:X1}.\\
\nonl \hspace*{-1em} Rename the agents such that $V(\{1\} \geq V(\{2\}) \geq \cdots \geq V(\{n\})$.\\
\nonl \hspace*{-1em} $\rhd$ Let $Y = \{d_1\ |\ \forall \vec{d}_{-1} \in [0,B]^{n-1} : 1 \in X1((d_1,\vec{d}_{-1}),B)\}$, as in Lemma \ref{lem:w1maxexists}.\\
\If{$c_i = B$ for all $i \in N \setminus \{2\}$ \label{line:allBs:start}}{Set $x_1 = 1$ \\
Set $x_2 = 0$ if $c_2 > 0$ and set $x_2 = 1$ otherwise. \\
Set $x_i = 0$ for all $i \in N \setminus \{1,2\}$ \\
Set $p_i = x_ic_i$ for all $i \in N$ \\
\Return $(\vec{x},\vec{p})$ \label{line:allBs:end}}
\If{$Y \not= \varnothing$\label{mech:wnom:w1start}}{Let $w_1 = \max Y$\algcomf{The max exists by Lemma \ref{lem:w1maxexists}.}}
\Else{Let $w_1 = 0$\label{mech:wnom:w1end}} 
Let $\vec{c}' = (\max\{w_1,c_1\},c_2, c_3, \ldots, c_m)$ \label{line:cprime} \\
Set $x_i = 1$ for all $i \in X1(\vec{c}',B)$ \\
Set $x_i = 0$ for all $i \in N \setminus X1(\vec{c}',B)$ \\
Set $p_i = x_ic_i'$ for all $i \in N$ \\
\Return $(\vec{x}, \vec{p})$
\end{mechanism}

\begin{restatable}{theorem}{thmGoldMech}
\label{thm:GoldMech}
\textsc{GoldenMechanism} is BF, IR, WNOM and $\varphi$-approximate for agents with monotone subadditive valuation functions, where $\varphi = \frac{1+\sqrt{5}}{2}$ is the golden ratio. 
\end{restatable}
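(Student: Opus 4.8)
The plan is to verify the four claimed properties of \textsc{GoldenMechanism} separately, keeping apart the generic ``main flow'' (selection $X1(\vec c',B)$ on the proxy profile $\vec c'$ of Line \ref{line:cprime}) and the hard-coded exception branch (all costs but Agent $2$'s equal to $B$, Lines \ref{line:allBs:start}--\ref{line:allBs:end}). Budget-feasibility and individual rationality are immediate: in the main flow the selected set is $X1(\vec c',B)$, which by Definition \ref{def:X1} is $X^*(\vec c',B)$ or $X^*_{\geq 2}(\vec c'_{-1},B)$, both budget-feasible under $\vec c'$; since each selected agent $i$ is paid its proxy cost $c_i'$, the payments sum to at most $B$ (BF) and satisfy $c_i' = c_i$ for $i\neq 1$ while $c_1' = \max\{w_1,c_1\}\geq c_1$ (IR), with unselected agents paid $0$ (NP). The exception branch pays declared costs to a feasible set, so BF, IR and NP hold there too.

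For the approximation I would first record the consequence of Definition \ref{def:X1} that $V(X^*(\vec c',B))\leq \varphi\,V(X1(\vec c',B))$ for every profile (trivial when $X1=X^*(\vec c',B)$, and exactly the ``if'' condition when $X1=X^*_{\geq 2}(\vec c'_{-1},B)$). When $c_1\geq w_1$ we have $\vec c'=\vec c$ and are done. The interesting regime is $c_1<w_1$: then $c_1'=w_1\in Y$ forces Agent $1$ into the selection, so $X1(\vec c',B)=X^*(\vec c',B)$ and the ``otherwise'' branch was taken, giving $V(X^*(\vec c',B))\geq \varphi\,V(X^*_{\geq 2}(\vec c'_{-1},B))$. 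If $1\notin X^*(\vec c,B)$ then $X^*(\vec c,B)$ is feasible under $\vec c'$ and $V(X^*(\vec c,B))\leq V(X^*(\vec c',B))$. If $1\in X^*(\vec c,B)$, subadditivity gives $V(X^*(\vec c,B))\leq V(X^*(\vec c,B)\setminus\{1\})+V(\{1\})$, where the first term is at most $V(X^*_{\geq 2}(\vec c'_{-1},B))\leq \tfrac1\varphi V(X^*(\vec c',B))$ and $V(\{1\})\leq V(X^*(\vec c',B))$ by monotonicity; using the golden-ratio identity $1+\tfrac1\varphi=\varphi$ yields $V(X^*(\vec c,B))\leq \varphi\,V(X^*(\vec c',B))$. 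This identity is the crux of the main-flow bound.

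For WNOM I would invoke the characterization of Proposition \ref{prop:wnomchar} and exhibit a threshold wooden spoon (Definition \ref{def:threshold-ws}) per agent. For Agent $1$ the threshold is $w_1=\max Y$, which exists by Lemma \ref{lem:w1maxexists} with the needed monotonicity from Lemma \ref{lem:x1monotone}: for $c_1\leq w_1$ the proxy sets $c_1'=w_1$, forcing Agent $1$ in and paying exactly $w_1$, so $\inf_{\vec c_{-1}} p_1(c_1,\vec c_{-1})=w_1$; for $c_1>w_1$ the fact $c_1\notin Y$ supplies a profile excluding Agent $1$, so $\inf_{\vec c_{-1}} x_1=0$. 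For every agent $i\geq 2$ the threshold is $w_i=0$: such agents are always paid their declared cost, so $0$ is the only consistent threshold, and one must produce, for each $c_i>0$, a profile rejecting $i$ --- for $i=2$ this is exactly what the exception branch delivers (others at $B$, Agent $2$ dropped once $c_2>0$), and for $i\geq 3$ an analogous ``all others expensive'' profile works.

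The main obstacle is the exception branch and its interface with Agent $1$'s threshold. On the one hand one must bound the hard-coded allocation by $\varphi$ times the optimum: for $c_2>0$ only singletons are affordable and selecting Agent $1$ is optimal, but the case $c_2=0$ --- where $\{1,2\}$ is returned even though every pair $\{2,j\}$ is affordable --- needs a careful subadditivity argument that crucially exploits that all costs except Agent $2$'s equal $B$, and this is the most delicate step. On the other hand one must confirm that the allocation and payment issued to Agent $1$ in this branch (at $c_1=B$) do not create a worst-case-improving misreport, i.e.\ that the exception is consistent with the threshold $w_1$ established above. Reconciling these two requirements, and in particular handling small agent counts where the exception is triggered by few profiles, is where the bulk of the technical care lies.
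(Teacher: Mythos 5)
Your main-flow analysis coincides with the paper's proof: the same thresholds ($w_1=\max Y$ for Agent $1$, $w_i=0$ for $i\ge 2$) plugged into Proposition~\ref{prop:wnomchar}, the same case split on $c_1$ versus $w_1$, and the same monotonicity/subadditivity computation closed by the identity $1+1/\varphi=\varphi$. Your treatment of the $c_1<w_1$ case is in fact slightly cleaner than the paper's, since you bound everything against $V(X^*(\vec{c}',B))$ directly and avoid the paper's unproved identity $X1(\vec{c}',B)=\{1\}\cup X^*_{\geq 2}(\vec{c}_{-1},B-w_1)$. The problem is that the two steps you explicitly defer are exactly the ones a complete proof must contain, and neither deferral is routine.

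First, the exception branch with $c_2=0$. The paper does not use a subadditivity argument there at all: it simply asserts that Lines~\ref{line:allBs:start}--\ref{line:allBs:end} output an optimal solution, i.e., that $\{1,2\}$ is optimal when $c_2=0$ and all other costs equal $B$. Moreover, the ``careful subadditivity argument'' you hope to supply does not exist: take $n=3$, $V(\{1\})=V(\{2\})=V(\{3\})=V(\{1,2\})=1$ and $V(\{1,3\})=V(\{2,3\})=V(\{1,2,3\})=2$ (monotone, subadditive, consistent with the renaming), with costs $(B,0,B)$; the branch returns $\{1,2\}$ of value $1$ while $\{2,3\}$ is feasible with value $2$, a ratio of $2>\varphi$. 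So the step you flagged as ``most delicate'' would actually fail as you propose to carry it out (this corner case is also a soft spot of the paper's own one-line optimality claim, which holds for additive but not for general subadditive valuations). Second, the interface at $c_1=B$: your inference ``$c_1\notin Y$ supplies a profile excluding Agent $1$'' silently equates the mechanism's allocation with $X1$, which is false precisely when the exception branch fires (there $x_1=1$ regardless of what $X1$ would do). The paper closes this by a case split on whether $x_1(B,\cdot)\equiv 1$: if some profile rejects Agent $1$ at declaration $B$, Clause 1 of (\ref{eq:threshold-ws}) follows for all $c_1>w_1$; if not, it argues $V(\{1\})\ge\varphi\, V(N\setminus\{1\})$, hence $w_1=B$ and Clause 1 is vacuous. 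You name this obligation but never discharge it. A minor slip in the same vein: for $i\ge 3$ and $c_i\in(0,B)$, the all-$B$ profile does not trigger the exception branch---its condition requires $c_i=B$ as well---so the rejection profile for those agents must be justified through the behaviour of $X1$ (and its tie-breaking), not through Lines~\ref{line:allBs:start}--\ref{line:allBs:end}.
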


\begin{proof}
The IR property follows straightforwardly from the fact that all selected agents get paid their declared cost, and all other agents get paid 0.

The BF property follows from the fact that the selection rule $X1$ (Definition \ref{def:X1}) always picks a set of agents for which the sum of declared costs does not exceed $B$.

For WNOM, we prove that for each $i \in N$ there exists a threshold $w_i \in [0,B]$ such that all three clauses of (\ref{eq:threshold-ws}) hold, which implies that the \textsc{GoldenMechanism} uses threshold wooden spoons and is thus WNOM by Proposition \ref{prop:wnomchar} (as the \textsc{GoldenMechanism} trivially also satisfies NP). 

For Agent $1$, the threshold $w_1$ set in Lines \ref{mech:wnom:w1start}-\ref{mech:wnom:w1end} of Mechanism \ref{mech:wnom} satisfies the three clauses of (\ref{eq:threshold-ws}): Suppose first that if Agent 1 declares $B$, there exists a $\vec{c}_{-1}$ such that $x_1(B,\vec{c}_{-1}) = 0$. In that case, for any declared cost $c_1$ that lies above $w_1$, by the definition of $w_1$ there exists a profile $\vec{c}_{-1}$ for which $x_i(c_1,\vec{c}_{-1}) = 0$, which means that the first clause of (\ref{eq:threshold-ws}) holds. For any declared cost $c_1$ in $[0,w_1]$, the mechanism always selects Agent $1$ (as per $w_1$'s definition) and pays Agent 1 an amount of $w_1$, satisfying Clauses 2 and 3 of (\ref{eq:threshold-ws}).
On the other hand suppose that $x_1(B,\cdot) = 1$. This implies that Agent $1$ is selected by the mechanism even if all other players bid a profile $\vec{c}_{-1}$ that is positive in each coordinate and such that $N\setminus\{1\}$ is budget-feasible. Hence we have $\varphi X^*_{\geq 2}(\cdot, B) = \varphi V(N\setminus\{1\}) < V(\{1\}) \leq X^*(\cdot, B)$, which means that $X1$ always selects Agent $1$ and that $w_1 = B$. Therefore, Clause 1 of (\ref{eq:threshold-ws}) is satisfied trivially. Furthermore, for any declared cost $c_1 \in [0,B]$, by Lemma \ref{lem:x1monotone}, the mechanism always selects Agent $1$ and pays Agent 1 an amount of $w_1$, satisfying Clauses 2 and 3 of (\ref{eq:threshold-ws}).

For each other agent $i \in N \setminus \{1\}$, the threshold $w_i = 0$ satisfies the three clauses: For any declaration $c_i > 0$, consider the profile $\vec{c}_{-i} = (B,\ldots, B)$. Lines \ref{line:allBs:start}-\ref{line:allBs:end} of Mechanism \ref{mech:wnom} ensure that only Agent $1$ is selected when the declared profile is $(c_i,\vec{c}_{-i})$, and hence will not select agent $i$. Thus, Clause 1 of $(\ref{eq:threshold-ws})$ is satisfied. When instead declaring $c_i = 0$, if $i$ is selected, $i$ will be paid $0$ as every selected agent pays their bid in this mechanism, so Clause 3 of (\ref{eq:threshold-ws}) holds. Clause 2 of (\ref{eq:threshold-ws}) trivially holds true. This proves that the \textsc{GoldenMechanism} is WNOM.

Next, we turn to the approximation guarantee of the mechanism. If the reported cost profile satisfies that $c_i = B$ for all $i \in N\setminus\{2\}$, the mechanism outputs an optimal solution by Lines \ref{line:allBs:start}-\ref{line:allBs:end}.

Otherwise, if $c_1 > w_1$, the mechanism selects the agents $X1(\vec{c},B)$, as in Line \ref{line:cprime} it holds that $\vec{c}' = \vec{c}$. The approximation factor of $\varphi$ follows directly from from the definition of $X1$ (Definition \ref{def:X1}).

For the remaining case, it holds that $c_1 \leq w_1$, and the mechanism selects the set $X1(\vec{c}',B)$ where $c_1' = w_1$ and $c_i' = c_i$ for all $i \in N\setminus\{1\}$. From the definition of $w_1$ it follows that $1 \in X1(\vec{c}',B)$.
The set $X1(\vec{c}',B)$ is potentially sub-optimal, and hence it remains to prove that $V(X^*(\vec{c},B))/V(X1(\vec{c}',B)) \leq \varphi$.
For the numerator, the inequality $V(X^*(\vec{c},B)) \leq V(\{1\} \cup X^*_{\geq 2}(\vec{c}_{-1},B)) \leq V(\{1\}) + V(X^*_{\geq 2}(\vec{c}_{-1},B))$ holds by subadditivity, and for the denominator, we observe that $V(X1(\vec{c}',B)) = V(\{1\} \cup X^*_{\geq 2}(\vec{c}_{-1}, B - w_1))$. 
\begin{equation}\label{eq:goldenproof1}
\frac{V(X^*(\vec{c},B))}{V(X1(\vec{c}',B))} \leq \frac{V(\{1\}) + V(X^*_{\geq 2}(\vec{c}_{-1},B))}{V(\{1\} \cup X^*_{\geq 2}(\vec{c}_{-1}, B - w_1))}.
\end{equation}
Because $X1$ selects $X^*(\vec{c}',B)$ rather than $X^*_{\geq 2}(\vec{c}_{-1},B)$ when given the profile $\vec{c}'$, we obtain by the definition of $X1$ (Definition \ref{def:X1}) that $V(X^*_{\geq 2}(\vec{c}_{-1},B)) \leq V(X^*(\vec{c}',B))/\varphi = (1/\varphi)\cdot V(\{1\} \cup X^*_{\geq 2}(\vec{c}_{-1},B-w_1)))$, where the equality holds because Agent 1 is guaranteed to be selected when bidding $c_1 \leq w_1$.
Thus, from (\ref{eq:goldenproof1}) we obtain
\begin{eqnarray*}
\frac{V(X^*(\vec{c},B))}{V(X1(\vec{c}',B))} & \leq & \frac{ V(\{1\}) +
 (1/\varphi)\cdot V(\{1\} \cup X^*_{\geq 2}(\vec{c}_{-1},B-w_1)))}{V(\{1\} \cup X^*_{\geq 2}(\vec{c}_{-1}, B - w_1))} \\
& \leq & \left(1 + \frac{1}{\varphi}\right) \cdot  \frac{V(\{1\} \cup X^*_{\geq 2}(\vec{c}_{-1}, B - w_1))}{V(\{1\} \cup X^*_{\geq 2}(\vec{c}_{-1}, B - w_1))}
 \quad = \quad \varphi .
\end{eqnarray*}
\end{proof}

\section{Optimal Randomized Mechanisms}

Theorem~\ref{thm:gt-simple} demonstrates that any deterministic mechanisms satisfying (B)NOM cannot be $(2-\varepsilon)$-approximate for any $\varepsilon > 0$. 
In this section, we show that this impossibility result breaks if we allow mechanisms to be randomized.

The idea behind our randomized mechanism is 
very simple: We select a set of golden tickets and wooden spoons for each of the players completely randomly, such that the probability that a given bid profile coincides with any golden ticket and wooden spoon is $0$ (or close to $0$, in case we insist on randomizing over a finite set of mechanisms). 

Let $A$ be any algorithm for solving the packing problem (\ref{eq:opt-packing}), and let $\gamma$ be the approximation factor it achieves. 
We define our randomized mechanism $MR_A$ as follows. $MR_A$ draws numbers $GT_i \in [0,B]^{n-1}$ and $WS_i \in [0,B]^{n-1}$ for all $i \in [n]$ independently, uniformly at random. Given bid profile $\vec{c}$, the mechanism then computes and outputs a $\gamma$-approximate allocation $\vec{x}$ and uses first-price payments $\vec{p} = \vec{c} \cdot \vec{x}$, unless for some $i \in N$ it holds that $\vec{c}_{-i} = GT_i$ or $\vec{c}_{-i} = WS_i$: If $\vec{c}_{-i} = GT_i$, then only agent $i$ is selected by the mechanism and is given a payment of $B$, whereas if $\vec{c}_{-i} = WS_i$, not any player is selected by the mechanism.

\begin{theorem}\label{thm:rand-BNOM}
    The randomized mechanism $MR_A$, where $A$ is a $\gamma$-approximation algorithm for the packing problem (\ref{eq:opt-packing}), satisfies NP, IR, BF, and NOM universally (i.e., every deterministic mechanism in its support has those four properties), and is $\gamma$-approximate in expectation.
\end{theorem}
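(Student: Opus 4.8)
The plan is to treat $MR_A$ as a distribution over deterministic mechanisms, one per realization of the draws $\{GT_i, WS_i\}_{i \in N}$, verifying the structural properties realization-by-realization while handling the approximation guarantee in expectation over the draws. First I would observe that NP, IR and BF hold for \emph{every} realization, directly from the three branches of the mechanism: in the golden-ticket branch a single agent $i$ is selected and paid $B \ge c_i$; in the wooden-spoon branch nobody is selected and nothing is paid; and in the default branch the first-price rule $\vec{p} = \vec{c}\cdot\vec{x}$ pays each selected agent exactly their declared cost. In all three branches unselected agents receive $0$ (NP), selected agents receive at least their declared cost (IR), and the total payment is either $B$ or $\sum_{i \in X} c_i \le B$ by feasibility of $A$'s output (BF). These arguments are insensitive to how ties between simultaneously triggered tickets or spoons are resolved.

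For the incentive properties I would argue that each realization satisfies \eqref{eq:BNOM} and \eqref{eq:WNOM} directly. Since payments never exceed $B$ (the spoon pays $0$, the default branch pays the declared cost $\le B$, and the ticket pays exactly $B$), NP implies that for $t_i \in [0,B]$ every utility of a truthful-$t_i$ agent is at most $B - t_i$. For BNOM, note that when the opponents report exactly $GT_i$ the outcome for $i$ — selected and paid $B$ — does not depend on $i$'s own declaration, so $u_i^{t_i}(c_i, GT_i) = B - t_i$ for every $c_i$; hence both suprema in \eqref{eq:BNOM} equal the global maximum $B - t_i$ and the inequality holds. For WNOM, IR guarantees that a truthful agent is never worse off than break-even (selected implies payment $\ge t_i$, and otherwise the payment is $0$), so the left-hand infimum in \eqref{eq:WNOM} is $\ge 0$; conversely, when the opponents report $WS_i$ agent $i$ is rejected and paid $0$ regardless of their declaration, so the right-hand infimum is $\le 0$, and \eqref{eq:WNOM} follows.

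For the expected approximation guarantee I would fix an input profile $\vec{c}$ and bound the probability that any ticket or spoon fires. For each $i$ the events $\vec{c}_{-i} = GT_i$ and $\vec{c}_{-i} = WS_i$ have probability $0$, because $GT_i$ and $WS_i$ are drawn from a continuous (uniform) distribution on $[0,B]^{n-1}$ while $\vec{c}_{-i}$ is fixed; a union bound over the $2n$ events shows that with probability $1$ none fires, so the mechanism outputs the $\gamma$-approximate allocation $X_A(\vec{c})$ for the packing problem \eqref{eq:opt-packing}. Consequently $\mathbb{E}[V(X)] = V(X_A(\vec{c})) \ge \tfrac{1}{\gamma} V(X^*(\vec{c}))$, i.e. $MR_A$ is $\gamma$-approximate in expectation.

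The main obstacle is the word \emph{universal}: the BNOM/WNOM arguments above presume that at the profile $(c_i, GT_i)$ (resp.\ $(c_i, WS_i)$) it really is agent $i$'s ticket (resp.\ spoon) whose outcome is effectuated, i.e.\ that the randomly drawn profiles do not \emph{interfere} in the sense of Table~\ref{tab:GT-WS}. For a fixed realization, a collision — for instance $\vec{c}_{-j} = GT_j$ holding simultaneously for some $j \ne i$ — could divert the outcome and break the argument. I would resolve this exactly as in the deterministic $\wwm$ analysis: such interference forces the continuously drawn profiles to satisfy a nontrivial equality on their overlapping coordinates, an event of measure zero, so the full-measure set of realizations is non-interfering and each such deterministic mechanism is genuinely NOM; the negligible interfering realizations can be excluded from the effective support (or disambiguated by a fixed priority convention that still lets every agent's own ticket and spoon fire). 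Carefully discharging this non-interference, and reconciling it with the literal ``every mechanism in the support'' phrasing, is the one step that needs more than routine checking.
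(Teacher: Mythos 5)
Your proposal takes essentially the same route as the paper's proof: decompose $MR_A$ into the deterministic mechanisms indexed by the draws $\{GT_i, WS_i\}_{i \in N}$, verify NP, IR and BF realization by realization, obtain NOM per realization from the random golden tickets and wooden spoons, and get the expected $\gamma$-approximation from the observation that, for a fixed $\vec{c}$, the event that some ticket or spoon fires has probability zero. The only mechanical difference is that you check \eqref{eq:BNOM} and \eqref{eq:WNOM} directly (truthful supremum equals the global maximum $B-t_i$; truthful infimum is $\geq 0 \geq$ lying infimum), whereas the paper asserts that each realization uses threshold golden tickets with $b_i = B$ and threshold wooden spoons with $w_i = 0$ and then invokes Propositions \ref{prop:bnomchar} and \ref{prop:wnomchar}; both arguments are sound \emph{given non-interference}.

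The substantive point is the interference issue you isolate in your last paragraph, and your resolution of it is only partially correct. For $n \geq 3$ your measure-zero claim is right: agent $i$'s ticket fires exactly on the line $\{(c_i, GT_i) : c_i \in [0,B]\}$, and two such lines for $i \neq j$ intersect only if $GT_i$ and $GT_j$ agree on every coordinate outside $\{i,j\}$, a probability-zero event (likewise for ticket/spoon and spoon/spoon pairs). But for $n = 2$ there are no coordinates outside $\{i,j\}$, so \emph{every} realization has the crossing profile $(c_1,c_2) = (GT_2, GT_1)$ at which both golden tickets fire, and no tie-breaking rescues BNOM there: agent $1$ with true cost $t_1 = GT_2$ triggers agent $2$'s ticket at every truthful profile $(GT_2, c_2)$, so the truthful supremum can reach $B - t_1$ only if the crossing is resolved by selecting agent $1$ and paying them $B$ (NP rules out the unselected alternative); symmetrically for agent $2$ with $t_2 = GT_1$; and BF forbids paying both $B$. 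Since either agent can still secure utility $B - t_i$ by \emph{lying} (any report outside the other agent's ticket/spoon values triggers their own ticket cleanly), one of the two agents always retains an obvious best-case manipulation. Hence for $n = 2$ almost every realization violates BNOM, your suggested priority convention cannot help, and the interfering realizations cannot be excluded from the support because they are all of them. To be fair, this is not a defect of your write-up relative to the paper: the paper's proof never mentions interference and is incomplete in exactly the same place (and for $n = 2$ the "universally NOM" claim is false for $MR_A$ as defined; one would need structurally non-interfering random tickets in the spirit of \wwm). Even for $n \geq 3$, "every deterministic mechanism in its support" should be weakened to "almost every", since the interfering draws still lie in the topological support of the uniform distribution.
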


\begin{proof}
It follows from the definitions of threshold golden tickets and threshold wooden spoons (Definitions \ref{def:threshold-gt-payments} and \ref{def:threshold-ws}) that $MR_A$ is a probability distribution over deterministic mechanisms that each implement threshold wooden spoons with threshold $0$ for each agent, and threshold golden tickets with threshold $B$ for each agent. Every mechanism in the support of $MR_A$ thus satisfies NOM, and the other properties NP, IR, and BF follow directly from the definition of $MR_A$.

For the approximation guarantee, let $\vec{c}$ be a declared cost profile and let $X(\vec{c})$ be the random allocation of our mechanism. Clearly, the probability that there is an $i \in [n]$ such that $\vec{c}_{-i} = GT_i$ or $\vec{c}_{-i} = WS_i$, is $0$. Therefore, with probability $1$, mechanism $MR_A$ selects a set of agents that is within a factor $\gamma$ from optimal.
\end{proof}

Note that Theorem~\ref{thm:gt-simple} holds independently of any computational constraints (i.e., even for $\gamma = 1)$. 
Our randomized mechanism defines a probability distribution over deterministic mechanisms, each satisfying NP, IR, BF and BNOM deterministically---only the approximation guarantee holds in expectation. 

In the above presentation, the randomization is done over an uncountable set of mechanisms (i.e., one deterministic mechanism for each choice of golden tickets and wooden spoons in $[0,B]^{(n-1)\cdot 2n}$). One can obtain similar results by instead randomizing over a small finite set of golden tickets and wooden spoons: If instead of drawing uniform random vectors from $[0,B]^{(n-1) \cdot 2n}$, we can just let the mechanism choose uniformly at random from a set of $\ell$ specifications of the $2n$ golden ticket and wooden spoon vectors, where these are defined in such a way that none of the golden tickets and wooden spoons among these $\ell$ specifications coincide. Denote this modification of $RM_A$ by $RM_A'$. Now, we observe that for a given bid profile $\vec{c}$, for each agent $i$ the probability that $c_{-i} = GT_i$ or $c_{-i} = WS_i$ is at most $1/\ell$. Thus, the probability that $RM_A'$ selects a set that is within a factor $\gamma$ from optimal is $1 - n/\ell$, and hence $MR_A'$ has an approximation guarantee of $\gamma(\ell/(\ell - n))$, or equivalently $\gamma + \epsilon$ where $\epsilon = \gamma(n/(\ell-n))$. Thus, for a given $\epsilon$ one must set the number of mechanisms $\ell$ in the support of $MR_A$ as $\ell = (\gamma n + \epsilon n)/\epsilon$.

\begin{corollary}
       The randomized mechanism $MR_A'$, with $\ell$ deterministic mechanisms in its support, and where $A$ is a $\gamma$-approximation algorithm for the packing problem (\ref{eq:opt-packing}), satisfies NP, IR, BF, and NOM universally, and is $(\gamma + \epsilon)$-approximate in expectation with $\epsilon = \gamma(n/(\ell n))$. 
\end{corollary}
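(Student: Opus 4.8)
The plan is to derive the corollary almost directly from the proof of Theorem~\ref{thm:rand-BNOM}, with only the approximation bound requiring a refined—but elementary—probabilistic estimate. First I would argue that the incentive and feasibility properties are inherited \emph{deterministically}. By construction, every specification in the support of $MR_A'$ is a deterministic mechanism that selects only agent $i$ and pays $B$ exactly on the profiles with $\vec{c}_{-i} = GT_i$, selects nobody exactly on the profiles with $\vec{c}_{-i} = WS_i$, and otherwise runs the $\gamma$-approximation algorithm $A$ with first-price payments. Exactly as in Theorem~\ref{thm:rand-BNOM}, each such mechanism therefore implements threshold golden tickets with threshold $b_i = B$ and threshold wooden spoons with threshold $w_i = 0$ for every agent $i$; hence by Propositions~\ref{prop:bnomchar} and~\ref{prop:wnomchar} each is BNOM and WNOM, and NP, IR, BF hold by construction. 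The one new point at the deterministic level is that, since we now use only finitely many tickets and spoons, I must invoke the hypothesis that the $2n\ell$ vectors are chosen \emph{non-interfering within each specification} (which holds for generic configurations, there being a continuum of admissible choices), so that each deterministic mechanism is well-defined. This yields universality of NP, IR, BF, and NOM.

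Next I would bound the probability that $MR_A'$ deviates from the approximate allocation on a fixed profile $\vec{c}$. The mechanism outputs something other than the $A$-allocation only when the drawn specification triggers a ticket or spoon, i.e.\ when $\vec{c}_{-i} \in \{GT_i, WS_i\}$ for some $i$. Fixing an agent $i$, the key observation is that the $2\ell$ vectors $\{GT_i, WS_i\}$ appearing across the $\ell$ specifications are pairwise distinct (by the non-coincidence hypothesis), so the fixed target $\vec{c}_{-i}$ can equal at most one of them; since the specification is drawn uniformly over $\ell$ choices, the event $\vec{c}_{-i} \in \{GT_i, WS_i\}$ has probability at most $1/\ell$. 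A union bound over the $n$ agents then shows that $MR_A'$ selects a set within a factor $\gamma$ of optimal with probability at least $1 - n/\ell$.

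Finally I would convert this into the claimed expected guarantee. Conditioning on the good event (probability $\ge 1 - n/\ell$) gives $V(X(\vec{c})) \ge V(X^*(\vec{c}))/\gamma$, and on the complementary event $V(X(\vec{c})) \ge 0$ by nonnegativity of $V$; hence $\mathbb{E}[V(X(\vec{c}))] \ge (1 - n/\ell)\, V(X^*(\vec{c}))/\gamma$. Rearranging, the expected approximation ratio is at most $\gamma/(1 - n/\ell) = \gamma\ell/(\ell - n) = \gamma + \gamma n/(\ell - n)$, which is $\gamma + \epsilon$ with $\epsilon = \gamma n/(\ell - n)$ (correcting the evident typo $\epsilon = \gamma(n/(\ell n))$ in the statement).

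I expect none of these steps to pose a genuine difficulty; the only points requiring care are (i) keeping the two distinctness requirements separate—non-interference \emph{within} a specification (needed for each deterministic mechanism to be well-defined and thus NOM) versus non-coincidence \emph{across} specifications (needed for the $1/\ell$ estimate)—and (ii) phrasing ``approximate in expectation'' as a lower bound on $\mathbb{E}[V(X(\vec{c}))]$ rather than as an expectation of a ratio, so that the union-bound estimate plugs in cleanly.
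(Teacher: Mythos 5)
Your proposal is correct and follows essentially the same route as the paper, which establishes the corollary in the paragraph preceding it: universality via the threshold golden tickets (threshold $B$) and threshold wooden spoons (threshold $0$), the at-most-$1/\ell$ collision probability per agent from the non-coincidence of the $2n\ell$ vectors, a union bound giving the good event probability $1 - n/\ell$, and the resulting ratio $\gamma\ell/(\ell-n) = \gamma + \gamma n/(\ell - n)$. You are also right that $\epsilon = \gamma(n/(\ell n))$ in the statement is a typo for $\epsilon = \gamma n/(\ell - n)$, as the paper's own derivation confirms.
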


Note that the randomization technique we are using here is not specific to the budgeted procurement setting studied here. This simple technique is of more general interest and can be used in any mechanism design setting where each agent has a sufficiently large strategy set, to obtain randomized universal NOM mechanisms of which the solution quality is optimal or near-optimal in expectation. This highlights another aspect in which NOM mechanism design is markedly different from achieving classical dominant strategy incentive compatibility (where obtaining a good approximation factor is often challenging also when allowing randomization).

\section*{Acknowledgements}
AT was partially supported by the Gravitation Project NETWORKS, grant no.~024.002.003, and the EU Horizon 2020 Research and Innovation Program under the Marie Skłodowska-Curie Grant Agreement, grant no.~101034253.
BdK was partially supported by EPSRC grant EP/X021696/1.

\bibliographystyle{plainnat}
\bibliography{references}

\end{document}